\newcommand{\dt}{.}
\newtheorem{thm}{Theorem}
\newtheorem{pro}[thm]{Proposition}
\newtheorem{cor}[thm]{Corollary}
\newtheorem{lem}[thm]{Lemma}
\newcommand{\sa}{synchronizing automata}
\newcommand{\sw}{synchronizing word}
\newcommand{\csw}{carefully synchronizing word}
\newcommand{\csws}{carefully synchronizing words}
\newcommand{\sws}{synchronizing words}
\newcommand{\esw}{exactly synchronizing word}
\newcommand{\esws}{exactly synchronizing words}
\newcommand{\scn}{strongly connected}
\DeclareSymbolFont{rsfscript}{OMS}{rsfs}{m}{n}
\DeclareSymbolFontAlphabet{\mathrsfs}{rsfscript}
\newcommand{\mA}{\mathrsfs{A}}
\newcommand{\mB}{\mathrsfs{B}}
\newcommand{\mC}{\mathrsfs{C}}
\newcommand{\mE}{\mathrsfs{E}}
\newcommand{\mH}{\mathrsfs{H}}
\newcommand{\mP}{\mathrsfs{P}}
\newcommand{\mW}{\mathrsfs{W}}
\newcommand{\Cerny}{\v{C}ern\'{y}}
\title{Careful synchronization\\ of partial deterministic finite automata}
\author[1,2]{Hanan Shabana\thanks{hananshabana22@gmail.com; supported by the Competitiveness Enhancement Program of Ural Federal University.}}
\author[2]{M. V. Volkov\thanks{m.v.volkov@urfu.ru; supported by Ural Mathematical Center under agreement No. 075-02-2020-1537/1 with the Ministry of Science and Higher Education of the Russian Federation.}}
\affil[1]{Faculty of Electronic Engineering, Menoufia University, Menouf, Egypt}
\affil[2]{Institute of Natural Sciences and Mathematics, Ural Federal University, Ekaterinburg, Russia}
\date{}
\begin{document}

\maketitle

\begin{abstract}
We approach the task of computing a carefully synchronizing word of minimum length for a given partial deterministic automaton, encoding the problem as an instance of SAT and invoking a SAT solver. Our experiments demonstrate that this approach gives satisfactory results for automata with up to 100 states even if very modest computational resources are used. We compare our results with the ones obtained by the first author for exact synchronization, which is another version of synchronization studied in the literature, and draw some theoretical conclusions.\\[.2ex]
\textbf{Keywords:} Deterministic automaton; Partial deterministic automaton; Synchronization; Careful synchronization; Exact synchronization; Carefully synchronizing word; Exactly synchronizing word; SAT; SAT solver

\end{abstract}

\section{Background and motivation}
\label{sec:intro}
\setcounter{footnote}{0}

A \emph{deterministic finite automaton} is a triple $\langle Q,\Sigma,\delta\rangle$, where $Q$ and $\Sigma$ are finite sets called the \emph{state set} and the \emph{input alphabet}, respectively, and $\delta\colon Q\times\Sigma\to Q$ is a (not necessarily total) map. The elements of $Q$ and $\Sigma$ are called \emph{states} and \emph{letters}, respectively, and $\delta$ is referred to as the \emph{transition function}. If $\mA=\langle Q,\Sigma,\delta\rangle$ is such that the function $\delta$ is totally defined, we say that $\mA$ is a \emph{complete deterministic finite automaton} (CFA); otherwise, $\mA$ is called a \emph{partial deterministic finite automaton} (PFA).

Let $\Sigma^*$ stand for the set of all words over $\Sigma$, including the empty word, denoted $\varepsilon$, and let $\mathcal{P}(Q)$ be the power set of $Q$. The transition function $\delta$ extends to a function $\mathcal{P}(Q)\times\Sigma^*\to\mathcal{P}(Q)$, still denoted $\delta$, in the following inductive way: for every subset $S\subseteq Q$ and every word $w\in\Sigma^*$, we set
\[
\delta(S,w):=\begin{cases}S &\text{ if } w=\varepsilon, \\
\{\delta(q,a)\mid q\in\delta(S,v)\}&\text{ if $w=va$ with $v\in\Sigma^*$ and $a\in\Sigma$.}
\end{cases}
\]
(The set $\delta(S,v)$ in the right-hand side is defined by the inductive assumption.) Observe that $\delta(S,w)$ may be empty. If $\delta(S,w)$ is empty and $S$ consists of a single state $q$, we say that $w$ is \emph{undefined at $q$}; otherwise $w$ is said to be \emph{defined at $q$}.

When dealing with a fixed automaton, we write $q\dt w$ for $\delta(q,w)$ and $S\dt w$ for $\delta(S,w)$. We also simplify the notation for $\mA=\langle Q,\Sigma,\delta\rangle$, writing $\mA=\langle Q,\Sigma\rangle$.

A CFA $\mA=\langle Q,\Sigma\rangle$ is called \emph{synchronizing} if it possesses a word $w\in\Sigma^*$ whose action leaves the automaton in one particular state no matter at which state in $Q$ it is applied: $q\dt w=q'\dt w$ for all $q,q'\in Q$. Any $w$ with this property is said to be a \emph{synchronizing} word for the automaton. We refer the reader to the survey~\cite{Vo08} and the chapter~\cite{KV} of the forthcoming `Handbook of Automata Theory' for a discussion of the rich theory of complete \sa\ as well as their diverse connections and applications.

In the literature, there are two widely studied extensions of the concept of synchronization to PFAs: \emph{careful synchronization} and \emph{exact synchronization}. Careful synchronization was introduced by Rystsov~\cite{Rystsov:1980} and studied in detail by Martyugin~\cite{Martyugin08,Martyugin10,Martyugin12,Martyugin13,Martyugin14}\footnote{It should be mentioned that Rystsov~\cite{Rystsov:1980} used the term `synchronizing word' for what we call `carefully synchronizing word', following Martyugin~\cite{Martyugin08,Martyugin10,Martyugin12,Martyugin13,Martyugin14}. The authors are grateful to Dr. Pavel Panteleev who drew their attention to Rystsov's paper.}. A PFA $\mA=\langle Q,\Sigma\rangle$ is said to be \emph{carefully synchronizing} if there is a word $w=a_1\cdots a_\ell$, with $a_1,\dots,a_\ell\in\Sigma$, that satisfies the following conditions:
\begin{enumerate}
 	\itemindent=16pt \itemsep=-4pt
 	\item[$(C1)$:] the letter $a_1$ is defined at every state in $Q$;
 	\item[$(C2)$:] the letter $a_t$ with $1<t\le \ell$ is defined at every state in $Q.a_1\cdots a_{t-1}$,
 	\item[$(C3)$:] $|Q.w|=1$.
\end{enumerate}
Any word $w$ satisfying $(C1)$--$(C3)$ is called a \emph{carefully synchronizing word} for $\mA$. Thus, when a carefully synchronizing word is applied at any state in $Q$, no undefined transition occurs during the course of application. The PFA $\mP_4$ in Fig.~\ref{fig:example} serves as an example of a carefully synchronizing PFA; one can check that the word $a^2baba^2$ maps each of its states to the state 2 and bypasses the only undefined transition of $\mP_4$.
\begin{figure}[h]
\unitlength=1mm
\begin{picture}(20,24)(-49,-5)
\node(A)(0,20){1}
\node(B)(20,20){2}
\node(C)(20,0){3}
\node(D)(0,0){4}
\drawedge[curvedepth=3](A,B){$a$}
\drawedge[curvedepth=-3](A,B){$b$}
\drawedge(B,C){$b$}
\drawedge(C,D){$b$}
\drawedge(D,A){$a$}
\drawloop[loopangle=0](B){$a$}
\drawloop[loopangle=-0](C){$a$}
\end{picture}
\caption{The automaton $\mP_4$}\label{fig:example}
\end{figure}

If a word $w$ satisfies the condition $(C3)$, it is called an \emph{exactly synchronizing word} for $\mA$. Thus, $w$ can be undefined at some states in $Q$ but there must be a state at which $w$ is defined and $q\dt w=q'\dt w$ for all $q,q'\in Q$ at which $w$ is defined. Clearly, a carefully synchronizing word is exactly synchronizing but the converse needs not be true. A PFA is said to be \emph{exactly synchronizing} if it possesses an exactly synchronizing word. The class of exactly synchronizing PFAs is much larger than that of carefully synchronizing PFAs: for instance, if one adds to a synchronizing CFA a new state at which no letter is defined, one gets an exactly synchronizing PFA which is not carefully synchronizing since the condition $(C1)$ fails for each letter. Observe that when restricted to CFAs, both careful synchronization and exact synchronization coincide with the above `standard' notion of synchronization.

Both versions of synchronization that we have defined have interesting connections and numerous applications.

Careful synchronization is relevant in industrial robotics where \sa\ are widely used to design feeders, sorters, and orienters that work with flows of certain objects carried by a conveyer. The goal is achieved by making the flow encounter passive obstacles placed appropriately along the conveyer belt. The situation can be modelled by an finite automaton whose states represent possible orientations of the objects while the action of letters represents the effect of obstacles. Then a \sw\ corresponds to a sequence of obstacles that forces the objects take a prescibed orientation. We refer to Natarajan~\cite{Na86,Na89} for the origin of this automata-based approach; a transparent illustrative example can be found in~\cite{AnVo04}. In practice, objects to be oriented or sorted often have fragile parts that could be damaged if hitting an obstacle. In order to prevent any damage, we have to forbid `dangerous' transitions in the automaton modelling the orienter/sorter so that the automaton becomes partial, with \csws\ corresponding to `safe' obstacle sequences. (Actually, the term `careful synchronization' has been selected with this application in mind.) Another application comes from the coding theory\footnote{We refer the reader to \cite[Chapters~3 and~10]{Berstel&Perrin&Reutenauer:2009} for a detailed account of profound connections between codes and automata.}. Recall that a \emph{prefix code} over a finite alphabet $\Sigma$ is a set $X\subset\Sigma^*$ such that no word of $X$ is a prefix of another word of $X$. Decoding of a finite prefix code $X$ over $\Sigma$ can be implemented by a finite deterministic automaton $\mA_X$ whose state $Q$ is the set of all proper prefixes of the words in $X$ (including the empty word $\varepsilon$) and whose transitions are defined as follows: for $q\in Q$ and $a\in\Sigma$,
\begin{displaymath}
 q\dt a =\begin{cases} qa & \text{if $qa$ is a proper prefix of a word of $X$},\\
 \varepsilon & \text{if $qa \in X$},\\
 \text{undefined} & \text{otherwise}.
\end{cases}
\end{displaymath}
In general, $\mA_X$  is a PFA (it is complete if and only if the code $X$ is not contained in another prefix code over $\Sigma$). It can be shown that if $\mA_X$ is carefully synchronizing, the code $X$ enjoys a very useful property: should a channel error occur, it suffices to transmit a \csw\ $w$ of $\mA_X$ such that $Q\dt w=\{\varepsilon\}$ to ensure that the next symbols will be decoded correctly.

Exact synchronization is relevant in biologically inspired computing where exactly synchronizing words appear under the name `constants' in the study of so-called splicing systems, see Bonizzoni and Jonoska~\cite{splicing}. Another cause of interest in exact synchronization is provided by so-called $\epsilon$-machines, important models in the theory of stationary information sources, see Travers and Crutchfield \cite{travers,travers-1}, where the term `exact synchronization' comes from.

In view of the connections just outlined, the problems of determining whether or not a given PFA is carefully or exactly synchronizing and of finding its shortest, carefully or exactly \sws\ are both natural and important. The bad news is that these problems turn out to be quite difficult. For careful synchronization, deciding whether a given PFA is carefully synchronizing is known to be PSPACE-complete and the minimum length of \csws\ for carefully synchronizing PFAs can be exponential as a function of the number of states. (These results were found by Rystsov~\cite{Rystsov:1980,Rystsov:1983} and later rediscovered and strengthened by Martyugin~\cite{Martyugin12}.) For exact synchronization, the situation is better in the strongly connected case. (Recall that a PFA $\mA=\langle Q,\Sigma\rangle$ is said to be \emph{strongly connected} if for every pair $(q,q')\in Q\times Q$, there exists a word $w\in\Sigma^*$ such that $q'=q\dt w$.) Namely, checking whether a given \scn\ PFA is exactly synchronizing can be done in polynomial time, and for \scn\ exactly synchronizing PFAs with $n$ states, there exits a cubic in $n$ upper bound on the minimum length of \esws---both these facts readily follow from a result in~\cite{travers}. However, Berlinkov~\cite{Berlinkov14} has shown that in the absence of strong connectivity, testing a given PFA for exact synchronization becomes PSPACE-complete; he has also constructed is a series of $n$-state PFAs whose shortest \esws\ have length of magnitude $2^{\Omega(n)}$. Thus, for the general case, problems related to exact synchronization are no less complicated than those related to careful synchronization.

We conclude that to attack synchronization issues in the realm of PFAs, one inevitably has to employ approaches that have proved to be efficient for dealing with computationally hard problems. One of popular such approaches consists in encoding instances of a problem of interest into instances of the Boolean satisfiability problem (SAT) that are then fed to a SAT solver, i.e., a specialized program designed to solve instances of SAT. We refer the reader to the survey~\cite{HKR} or to the handbook~\cite{BHvMW} for a detailed discussion of the SAT-solver approach and impressive examples of its successful applications in various areas.

For studying synchronization of CFAs, the SAT-solver approach was initiated by Skvortsov and Tipikin \cite{Skvortsov:2011} and G\"uni\c{c}en, Erdem, and Yenig\"un \cite{Gunicen:2013}. The present authors extended the approach to careful synchronization of PFAs in the conference paper~\cite{Motor}, while the paper~\cite{Shiop} by the first author dealt with exact synchronization. We mention also our earlier articles \cite{Sh18,ShVo18}, where the SAT-solver approach was applied for a study of synchronizing nondeterministic automata.

The present paper is an extended and augmented version of~\cite{Motor}. On the theoretical side, we include several new results  and provide details and proofs omitted in~\cite{Motor} due to space constraints. Besides that, we exhibit many additional experimental results, including a comparative study of careful and exact synchronization for certain families of PFAs.

\section{Reduction to SAT}
\label{sec:SAT}

For completeness, recall the formulation of the Boolean satisfiability problem (SAT). An instance of SAT is a pair $(V,C)$, where $V$ is a set of Boolean variables and $C$ is a collection of clauses over $V$. (A \emph{clause} over $V$ is a disjunction of literals and a \emph{literal} is either a variable in $V$ or the negation of a variable in~$V$.) Any \emph{truth assignment} on $V$, i.e., any map $\varphi\colon V\to\{0,1\}$, extends to a map $C\to\{0,1\}$ (still denoted by $\varphi$) via the usual rules of propositional calculus: $\varphi(\neg x)=1-\varphi(x)$, $\varphi(x\vee y)=\max\{\varphi(x),\varphi(y)\}$. A truth assignment $\varphi$ \emph{satisfies} $C$ if $\varphi(c)=1$ for all $c\in C$. The answer to an instance $(V,C)$ is YES if $(V,C)$ has a \emph{satisfying assignment} (i.e., a truth assignment on $V$ that satisfies $C$) and NO otherwise.

We aim to effectively reduce the following problem to SAT.

 \smallskip

\begin{tcolorbox}
 \noindent CSW (the existence of a \csw\ of a given length):

 \noindent\textsc{Input}: a PFA $\mathrsfs{A}$ and a positive integer $\ell$ (given in unary);

 \noindent\textsc{Output}: YES if $\mathrsfs{A}$ has a \csw\ of length $\ell$;\\\phantom{\textsc{Output}:} NO otherwise.
\end{tcolorbox}

 \smallskip

\textbf{Remark 1.} We have to assume that the integer $\ell$ is given in unary because with $\ell$ given in binary, a polynomial time reduction from CSW to SAT is hardly possible. Indeed, it easily follows from~\cite{Martyugin12} that the version of CSW in which the integer parameter is given in binary is PSPACE-hard, and the existence of a polynomial reduction from a PSPACE-hard problem to SAT would imply that the polynomial hierarchy collapses at level~1. In contrast, the version of CSW with the unary integer parameter is easily seen to belong to NP: given an instance $(\mathrsfs{A}=\langle Q,\Sigma\rangle,\ell)$ of CSW in this setting, guessing a word $w\in\Sigma^*$ of length $\ell$ is legitimate. Then one just checks whether or not $w$ is carefully synchronizing for $\mathrsfs{A}$, and time spent for this check is clearly polynomial in the size of $(\mathrsfs{A},\ell)$.

 \smallskip

Now, given an arbitrary instance $(\mathrsfs{A},\ell)$ of CSW, we construct an instance $(V,C)$ of SAT such that the answer to $(\mathrsfs{A},\ell)$ is YES if and only if so is the answer to $(V,C)$. Even though our encoding follows general patterns presented in \cite[Chapters~2 and~16]{BHvMW}, it has some specific features so that we describe it in full detail and provide a rigorous proof of its adequacy. In the following presentation of the encoding, precise definitions and statements are interwoven with less formal comments explaining the `physical' meaning of variables and clauses.

 So, take a PFA $\mathrsfs{A}=\langle Q,\Sigma\rangle$ and an integer $\ell>0$. Denote the sizes of $Q$ and $\Sigma$ by $n$ and $m$ respectively, and fix some numbering of these sets so that $Q=\{q_1,\dots,q_n\}$ and $\Sigma=\{a_1,\dots,a_m\}$.

 We start with introducing the variables used in the instance $(V,C)$ of SAT that encodes $(\mathrsfs{A},\ell)$. The set $V$ consists of two sorts of variables: $m\ell$ \emph{letter variables} $x_{i,t}$ with $1\le i\le m$, $1\le t\le\ell$, and $n(\ell+1)$ \emph{state variables} $y_{j,t}$ with $1\le j\le n$, $0\le t\le\ell$. We use the letter variables to encode the letters of a hypothetical \csw\ $w$ of length $\ell$: namely, we want the value of the variable $x_{i,t}$ to be 1 if and only if the $t$-th letter of $w$ is $a_i$. The intended meaning of the state variables is as follows: we want the value of the variable $y_{j,t}$ to be 1 whenever the state $q_j$ belongs to the image of $Q$ under the action of the prefix of $w$ of length $t$, in which situation we say that $q_j$ \emph{is active after $t$ steps}. We see that the total number of variables in $V$ is $m\ell+n(\ell+1)=(m+n)\ell+n$.

 Now we turn to constructing the set of clauses $C$. It consists of four groups. The group $I$ of \emph{initial clauses} contains $n$ one-literal clauses $y_{j,0}$, $1\le j\le n$, and expresses the fact that all states are active after 0 steps.

 For each $t=1,\dots,\ell$, the group $L$ of \emph{letter clauses} includes the clauses
 \begin{equation}
 \label{eq:only one letter}
 x_{1,t}\vee\dots\vee x_{m,t},\quad \neg x_{r,t}\vee\neg x_{s,t},\ \text{ where }\ 1\le r<s\le m.
 \end{equation}
 Clearly, the clauses \eqref{eq:only one letter} express the fact that the $t$-th position of our hypothetical \csw\ $w$ is occupied by exactly one letter in $\Sigma$. Altogether, $L$ contains $\ell\left(\frac{m(m-1)}2+1\right)$ clauses.

 For each $t=1,\dots,\ell$ and each triple $(q_j,a_i,q_k)$ such that $q_j\dt a_i=q_k$, the group $T$ of \emph{transition clauses} includes the clause
 \begin{equation}
 \label{eq:transition}
 \neg y_{j,t-1}\vee\neg x_{i,t}\vee y_{k,t}.
 \end{equation}
 Invoking the basic laws of propositional logic, one sees that the clause \eqref{eq:transition} is equivalent to the implication
 $y_{j,t-1}\mathop{\&} x_{i,t}\to y_{k,t},$
 that is, \eqref{eq:transition} expresses the fact that if the state $q_j$ has been active after $t-1$ steps and $a_i$ is the $t$-th letter of $w$, then the state $q_k=q_j\dt a_i$ becomes active after $t$ steps. Further, for each $t=1,\dots,\ell$ and each pair $(q_j,a_i)$ such that $a_i$ is undefined at $q_j$ in $\mA$, we add to $T$ the clause
 \begin{equation}
 \label{eq:undefined}
 \neg y_{j,t-1}\vee\neg x_{i,t}.
 \end{equation}
The clause is equivalent to the implication
 $y_{j,t-1}\to\neg x_{i,t},$
and thus, it expresses the requirement that the letter $a_i$ should not occur in the $t$-th position of $w$ if $q_j$ has been active after $t-1$ steps. Obviously, this corresponds to the conditions $(C1)$ (for $t=0$) and $(C2)$ (for $t>0$) in the definition of careful synchronization. For each $t=1,\dots,\ell$ and each pair $(q_j,a_i)\in Q\times\Sigma$, exactly one of the clauses \eqref{eq:transition} or \eqref{eq:undefined} occurs in $T$, whence $T$ consists of $\ell mn$ clauses.

The final group $S$ of  \emph{synchronization clauses} includes the clauses
 \begin{equation}
 \label{eq:only one state}
 \neg y_{r,\ell}\vee\neg y_{s,\ell},\ \text{ where }\ 1\le r<s\le n.
 \end{equation}
The clauses \eqref{eq:only one state} just say that at most one state remains active when the action of the word $w$ is completed, which corresponds to the condition $(C3)$ from the definition of careful synchronization. The group $S$ contains $\frac{n(n-1)}2$ clauses.

Summing up, the number of clauses in $C:=I\cup L\cup T\cup S$ is
 \begin{equation}
 \label{eq:number}
 n+\ell\left(\tfrac{m(m-1)}2+1\right)+\ell mn+\tfrac{n(n-1)}2=
 \ell\left(\tfrac{m(m-1)}2+mn+1\right)+\tfrac{n(n+1)}2.
 \end{equation}
In comparison with encodings used in our earlier papers~\cite{ShVo18,Sh18}, the encoding suggested here produces much smaller SAT instances\footnote{It is fair to say that encodings in~\cite{ShVo18,Sh18} were designed to handle  much more general nondeterministic automata so it is not a surprise that those encodings were bulkier than the present one.}. Since in the applications the size of the input alphabet is a (usually small) constant, the leading term in \eqref{eq:number} is $\Theta(\ell n)$ while the restrictions to PFAs of the encodings from~\cite{ShVo18,Sh18} have $\Theta(\ell n^2)$ clauses. We discuss at the end of this section (see Remarks~3 and~4) how one can further reduce the number of clauses involved.

\begin{thm}
\label{thm:reduction}
A PFA $\mathrsfs{A}$ has a \csw\ of length $\ell$ if and only if the instance $(V,C)$ of SAT constructed above is satisfiable. Moreover, the \csws\ of length $\ell$ for $\mathrsfs{A}$ are in a 1-1 correspondence with the restrictions of satisfying assignments of $(V,C)$ to the letter variables.
\end{thm}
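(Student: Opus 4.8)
The plan is to establish both directions of the equivalence by explicitly translating between \csws\ and satisfying assignments, while keeping a careful eye on the fact that the transition clauses \eqref{eq:transition} propagate activity only \emph{forwards}: they guarantee that the genuine image $Q\dt(a_{i_1}\cdots a_{i_t})$ is \emph{contained} in the set of states whose $y$-variable is true after $t$ steps, but not the reverse inclusion. This one-sidedness is the single delicate point, and it dictates how the two directions are set up.

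For the ``only if'' direction, I would fix a \csw\ $w=a_{i_1}\cdots a_{i_\ell}$ and define an assignment $\varphi$ by $\varphi(x_{i,t})=1$ iff $i=i_t$ and $\varphi(y_{j,t})=1$ iff $q_j\in Q\dt(a_{i_1}\cdots a_{i_t})$ (with $t=0$ giving $Q$). Verifying that $\varphi$ satisfies each group of clauses is then routine: the initial clauses hold because $Q\dt\varepsilon=Q$; the letter clauses hold because exactly one letter occupies each position; the transition clauses \eqref{eq:transition} hold because the image set is closed under the chosen transitions; and the synchronization clauses \eqref{eq:only one state} hold because $(C3)$ yields $|Q\dt w|=1$. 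The only place where carefulness enters is the undefined clauses \eqref{eq:undefined}: if $\varphi(y_{j,t-1})$ and $\varphi(x_{i,t})$ were both $1$ with $a_i$ undefined at $q_j$, then $q_j\in Q\dt(a_{i_1}\cdots a_{i_{t-1}})$ would be a state at which the $t$-th letter is undefined, contradicting $(C1)$--$(C2)$.

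For the ``if'' direction, I would start from a satisfying assignment $\varphi$, read off the unique letter $a_{i_t}$ in each position $t$ (well defined by the letter clauses), and set $w=a_{i_1}\cdots a_{i_\ell}$. The core step is an induction on $t$ proving simultaneously that $a_{i_t}$ is defined on all of $Q\dt(a_{i_1}\cdots a_{i_{t-1}})$ and that $Q\dt(a_{i_1}\cdots a_{i_t})\subseteq\{q_j : \varphi(y_{j,t})=1\}$. Definedness follows from the undefined clauses \eqref{eq:undefined} (a state in the image has a true $y$-variable by the induction hypothesis, so the chosen letter cannot be undefined there), and the containment follows from the transition clauses \eqref{eq:transition}; together these give $(C1)$ and $(C2)$. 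Finally, since no undefined transition ever occurs, $Q\dt w$ is nonempty, whereas the synchronization clauses \eqref{eq:only one state} force at most one variable $y_{j,\ell}$ to be true. Combining the containment at $t=\ell$ with these two facts yields $|Q\dt w|=1$, which is $(C3)$, so $w$ is carefully synchronizing.

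The ``moreover'' part then follows almost formally. The assignment built in the first direction restricts on the letter variables to the encoding $\psi_w$ of $w$ (namely $\psi_w(x_{i,t})=1$ iff the $t$-th letter of $w$ is $a_i$), and the word extracted in the second direction is precisely the word $w_\psi$ encoded by the letter-variable part $\psi$ of the given satisfying assignment. Since the letter clauses force every satisfying assignment to encode a genuine word, the maps $w\mapsto\psi_w$ and $\psi\mapsto w_\psi$ are mutually inverse: reading the word off from $\psi_w$ returns $w$, and re-encoding $w_\psi$ returns $\psi$. Hence they realize the claimed bijection between \csws\ of length $\ell$ and restrictions of satisfying assignments to the letter variables. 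I expect the forward-propagation asymmetry noted at the outset to be the main conceptual obstacle; everything else is a clause-by-clause bookkeeping verification.
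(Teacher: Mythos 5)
Your proposal is correct and follows essentially the same route as the paper's proof: the same explicit assignment in the forward direction (with the undefined clauses handling carefulness), and in the converse direction the same simultaneous induction establishing definedness of each letter on the current image together with the one-sided containment of $Q\dt w_t$ in the set of states with true $y$-variables, which is exactly the content of the paper's Claims 1 and 2. The ``forward-propagation'' asymmetry you flag is indeed the delicate point the paper's argument is built around, so no gap remains.
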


 \begin{proof}
 	Suppose that $\mathrsfs{A}$ has a \csw\ of length $\ell$. We fix such a word $w$ and denote by $w_t$ its prefix of length $t=1,\dots,\ell$. Define a truth assignment $\varphi\colon V\to\{0,1\}$ as follows: for $1\le i\le m$, $0\le j\le n$, $1\le t\le\ell$, let
 	\begin{align}
 	\label{eq:letter variables}
 	\varphi(x_{i,t})&:=\begin{cases}
 	1 &\text{if the $t$-th letter of $w$ is $a_i$,}\\
 	0 &\text{otherwise;}
 	\end{cases}\\
 	\label{eq:initial variables}
 	\varphi(y_{j,0})&:=1;\\
 	\label{eq:state variables}
 	\varphi(y_{j,t})&:=\begin{cases}
 	1 &\text{if the state $q_j$ lies in $Q\dt w_t$,}\\
 	0 &\text{otherwise.}
 	\end{cases}
 	\end{align}
 	In view of \eqref{eq:letter variables} and \eqref{eq:initial variables}, $\varphi$ satisfies all clauses in $L$ and respectively $I$. As $w_\ell=w$ and $|Q\dt w|=1$, we see that \eqref{eq:state variables} ensures that $\varphi$ satisfies all clauses in $S$. It remains to analyze the clauses in $T$. For each fixed $t=1,\dots,\ell$, these clauses are in a 1-1 correspondence with the pairs in $Q\times\Sigma$. We fix such a pair $(q_j,a_i)$, denote the clause corresponding to  $(q_j,a_i)$ by $c$ and consider three cases.
 	
\smallskip

\noindent\emph{\textbf{Case 1}: the letter $a_i$ is not the $t$-th letter of $w$}. Here $\varphi(x_{i,t})=0$ by \eqref{eq:letter variables}, and hence, $\varphi(c)=1$ as $\neg x_{i,t}$ occurs in $c$, independently of $c$ having the form \eqref{eq:transition} or \eqref{eq:undefined}.

\smallskip
 	
\noindent\emph{\textbf{Case 2}: the letter $a_i$ is the $t$-th letter of $w$ but it is undefined at $q_j$}. Here the clause $c$ must be of the form \eqref{eq:undefined}. Observe that $t>1$ in this case since the first letter of the \csw\ $w$ must be defined at each state in $Q$. Moreover, the state $q_j$ cannot belong to the set $Q\dt w_{t-1}$ because $a_i$ must be defined at each state in this set. Hence $\varphi(y_{j,t-1})=0$ by \eqref{eq:state variables}, and $\varphi(c)=1$ as the literal $\neg y_{j,t-1}$ occurs in $c$.

\smallskip
 	
\noindent\emph{\textbf{Case 3}: the letter $a_i$ is the $t$-th letter of $w$ and it is defined at $q_j$}. Here the clause $c$ must be of the form \eqref{eq:transition}, in which the literal $y_{k,t}$ corresponds to the state $q_k=q_j\dt a_i$. If the state $q_j$ does not belong to the set $Q\dt w_{t-1}$, then as in the previous case, we have $\varphi(y_{j,t-1})=0$ and $\varphi(c)=1$. If $q_j$ belongs to $Q\dt w_{t-1}$, then the state $q_k$ belongs to the set $(Q\dt w_{t-1})\dt a_i=Q\dt w_t$, whence $\varphi(y_{k,t})=1$ by \eqref{eq:state variables}. We conclude that $\varphi(c)=1$ as the literal $y_{k,t}$ occurs in $c$.
 	
 	\smallskip
 	
 	Conversely, suppose that $\varphi\colon V\to\{0,1\}$ is a satisfying assignment for $(V,C)$. Since $\varphi$ satisfies the clauses in $L$, for each $t=1,\dots,\ell$, there exists a unique $i\in\{1,\dots,m\}$ such that $\varphi(x_{i,t})=1$. This defines a map $\chi\colon\{1,\dots,\ell\}\to\{1,\dots,m\}$. Let $w:=a_{\chi(1)}\cdots a_{\chi(\ell)}$. We aim to show that $w$ is a \csw\ for $\mA$, i.e., that $w$ fulfils the conditions $(C1)$--$(C3)$. For this, we first prove two auxiliary claims. Recall that a state is said to be active after $t$ steps if it lies in $Q\dt w_t$, where, as above, $w_t$ is the length $t$ prefix of the word $w$. (By the length 0 prefix we understand the empty word $\varepsilon$.)
 	
 	\smallskip
 	
 	\emph{\textbf{Claim 1}. For each $t=0,1,\dots,\ell$, there are states active after $t$ steps}.
 	
 	\emph{\textbf{Claim 2}. If a state $q_k$ is active after $t$ steps, then $\varphi(y_{k,t})=1$}.
 	
 	We prove both claims simultaneously by induction on $t$. The induction basis $t=0$ is guaranteed by the fact that all states are active after 0 steps and $\varphi$ satisfies the clauses in $I$. Now suppose that $t>0$ and there are states active after $t-1$ steps. Let $q_r$ be such a state. Then $\varphi(y_{r,t-1})=1$ by the induction assumption. Let $i:=\chi(t)$, that is, $a_i$ is the $t$-th letter of the word $w$. Then $\varphi(x_{i,t})=1$, whence $\varphi$ cannot satisfy the clause of the form \eqref{eq:undefined} with $j=r$. Hence this clause cannot appear in $T$ as $\varphi$ satisfies the clauses in $T$. This means that the letter $a_i$ is defined at $q_r$ in $\mA$, and the state $q_s:=q_r\dt a_i$ is active after $t$ steps. Claim 1 is proved.
 	
 	Now let $q_k$ be an arbitrary state that is active after $t>0$ steps. Since $a_i$ is the $t$-th letter of $w$, we have $Q\dt w_t=(Q\dt w_{t-1})\dt a_i$, whence
 	$q_k=q_j\dt a_i$ for same $q_j\in Q\dt w_{t-1}$. Therefore the clause \eqref{eq:transition} occurs in $T$, and thus, it is satisfied by $\varphi$. Since $q_j$ is active after $t-1$ steps, $\varphi(y_{j,t-1})=1$ by the induction assumption; besides that, $\varphi(x_{i,t})=1$. We conclude that in order to satisfy \eqref{eq:transition}, the assignment $\varphi$ must fulfil $\varphi(y_{k,t})=1$. This completes the proof of Claim~2.
 	
 	\smallskip
 	
 	We turn to prove that the word $w$ fulfils $(C1)$ and $(C2)$. This amounts to verifying that for each $t=1,\dots,\ell$, the $t$-th letter of the word $w$ is defined at every state $q_j$ that is active after $t-1$ steps. Let, as above, $a_i$ stand for the $t$-th letter of $w$. If $a_j$ were undefined at $q_j$, then by the definition of the set $T$ of transition clauses, this set would include the corresponding clause \eqref{eq:undefined}. However, $\varphi(x_{i,t})=1$ by the construction of $w$ and $\varphi(y_{j,t-1})=1$ by Claim~2. Hence $\varphi$ does not satisfy this clause while the clauses from $T$ are satisfied by $\varphi$, a contradiction.
 	
 	Finally, consider $(C3)$. By Claim~1, some state is active after $\ell$ steps. On the other hand, the assignment $\varphi$ satisfies the clauses in $S$, which means that $\varphi(y_{j,\ell})=1$ for at most one index $j\in\{1,\dots,n\}$. By Claim~2 this implies that at most one state is active after $\ell$ steps. Thus, exactly one state is active after $\ell$ steps, that is, $|Q\dt w|=1$.
 \end{proof}

\textbf{Remark 2.}  In~\cite{Shiop}, the first author has constructed a SAT-encoding for the problem ESW (the existence of a \esw\ of a given length), which is defined analogously to CSW. We will not reproduce this encoding here; it uses the same set of variables as the above encoding for CSW but the set of clauses is essentially different. One may think that since the definition of an \esw\ differs from the definition of a \csw\ by the absence of the conditions $(C1)$ and $(C2)$, one could get an encoding for ESW by just omitting the clauses~\eqref{eq:undefined} that control these conditions in the encoding for CSW, and vice versa, one could encode CSW by appending the clauses~\eqref{eq:undefined} to the encoding for ESW. However, it is easy to exhibit counterexamples to show that such a naive transformation of our CSW encoding into an encoding for ESW fails. In the converse direction, the transformation produces a valid encoding for CSW but this encoding has many more clauses than the CSW encoding suggested here.

\smallskip

\textbf{Remark 3.} Observe that all clauses in the group $S$ of synchronization clauses and a majority of clauses in the group $L$ of letter clauses are typical `at-most-one' constraints. There are various way to express such constraints by fewer clauses. In our implementation, we have used the so-called ladder encoding suggested in \cite{GN04}, see also~\cite[Chapter~2]{BHvMW}. We demonstrate how the ladder encoding works on the set $S$.
We introduce $n-1$ additional variables  $f_{1},f_{2},\dots,f_{n-1}$ and substitute the clauses~\eqref{eq:only one state} by two new groups of clauses: the ladder validity clauses
\begin{equation}
\label{eq:ladder}
\neg f_{j+1}\vee f_{j}
\end{equation}
for $j=1,2,\dots,n-2$,
and the channelling clauses that correspond to the equivalence $y_{j,\ell}\longleftrightarrow f_{j-1}\mathop{\&}\neg f_{j}$:
\begin{equation}
\label{eq:channel}
\neg f_{j-1}\vee f_{j}\vee y_{j,\ell},\quad \neg y_{j,\ell}\vee f_{j-1},\quad \neg y_{j,\ell}\vee \neg f_{j}
\end{equation}
for $j=1,2,\dots,n$, where the clauses containing $f_0$ or $f_n$ are simplified as if $f_0=1$ or $f_n=0$. Altogether, we get $4n-4$ clauses in \eqref{eq:ladder} and \eqref{eq:channel} instead of $\frac{n(n-1)}2$ clauses in \eqref{eq:only one state} on the price of adding $n-1$ extra variables. The same trick allows us to decrease the number of clauses in the set $L$, but this is less important because the parameter $m$ (the size of the input alphabet) is usually small. Our experiments have shown that using ladder encoding significantly reduces time needed to solve CSW instances, especially for automata with large number of states.

\smallskip

\textbf{Remark 4.} In a majority of our experiments, we deal with PFAs that have only two input letters. Let us call such PFAs \emph{binary}. To encode the CSW/ESW instance $(\mathrsfs{A},\ell)$, where $\mA$ is a binary PFA, we can use only $\ell$ letter variables $x_1,\dots,x_\ell$ to encode the letters of a hypothetical carefully/exactly \sw\ $w$ of length $\ell$ since there is an obvious 1-1 correspondence between the truth assignments on the set $\{x_1,\dots,x_\ell\}$ and the words of length $\ell$ over any fixed 2-letter alphabet. In more formal terms, we can modify the above encoding of CSW and the corresponding encoding of ESW from \cite{Shiop}, substituting $x_t$ for $x_{1,t}$ and $\neg x_t$ for $x_{2,t}$ for all $t=1,2,\dots,\ell$ in all clauses in which $x_{1,t}$ or $x_{2,t}$ occur. Observe that the letter clauses \eqref{eq:only one letter} become tautologies after this substitution, and hence, they can be safely omitted. Thus, for a binary PFA $\mA$ with $n$ states, we may encode the CSW instance $(\mathrsfs{A},\ell)$ into a SAT instance with $\ell(n+2)+n-2$ variables and only $2\ell n+5n-4$ clauses if we use both the modification just described and the ladder encoding from Remark~3.

\section{Design of experiments}
\label{exp}

\subsection{General framework}
\label{general-exp-pfa}
The general framework of our experiments with random automata consists of the following basic steps.
\begin{enumerate}
\itemsep=-4pt
	\item[1.] A positive integer $n$ (the number of states) is fixed.
	\item[2.] A random PFA $\mathrsfs{A}$ with $n$ states is generated. 	
	\item[3.] The pair $(\mathrsfs{A},1)$ is encoded into a SAT instance $(V',C')$ as described in Sect.~\ref{sec:SAT} (if we study careful synchronization) or in \cite{Shiop} (if we study exact synchronization).
	\item[4.] The instance $(V',C')$ is scaled to the instance $(V,C)$ that encodes the pair $(\mathrsfs{A},\ell)$, see Remark~5 below.
	\item[5.] The SAT solver MiniSat 2.2.0 is invoked to solve the SAT instance $(V,C)$.
\end{enumerate}
We refer to~\cite{Minisat} for a description of the underlying ideas of MiniSat and to~\cite{Minisat_page} for a discussion and the source code of the solver.

\smallskip

\textbf{Remark 5.} An important feature of our encodings is that as soon as we have constructed the `primary' SAT instance $(V',C')$ that encodes the CSW/ESW instance $(\mathrsfs{A},1)$, we are in a position to scale $(V',C')$ to the SAT instance encoding the CSW/ESW instance $(\mathrsfs{A},\ell)$ for any value of $\ell$. In order to explain this feature, recall that MiniSAT accepts its input in the following text format (so-called simplified DIMACS CNF format). Every line beginning with \texttt{c} is a comment. The first non-comment line is of the form:

\texttt{p cnf NUMBER\_OF\_VARIABLES NUMBER\_OF\_CLAUSES}

\noindent Variables are represented by integers from 1 to \texttt{NUMBER\_OF\_VARIABLES}. The first non-comment line is followed by \texttt{NUMBER\_OF\_CLAUSES} non-comment lines each of which defines a clause. Every such line starts with a space-separated list of different non-zero integers corresponding to the literals of the clause: a positive integer corresponds to a literal which is a variable, and a negative integer corresponds to a literal which is the negation of a variable; the line ends in a space and the number~0.

For simplicity, we describe the scaling procedure for binary PFAs only and we assume that the ladder encoding from Remark~4 has not been used. (Both the generalization to PFAs over larger alphabets and the modification needed to accommodate additional variables involved in the ladder encoding are fairly straightforward.) Given a binary PFA $\mathrsfs{A}$ with $n$ states, we write the SAT instance $(V',C')$, which corresponds to $(\mathrsfs{A},1)$, in DIMACS CNF format, representing the variables $x_1$, $y_{j,0}$, $y_{j,1}$, $j=1,\dots,n$, by the numbers, respectively, $n+1$, $j$, $j+n+1$. For an illustration, see Table~\ref{tb:encoding} that shows the SAT encoding for the PFA $\mathrsfs{P}_4$ from Fig.~\ref{fig:example}.

\begin{table}[ht]
	\caption{The SAT encoding of the CSW instance $(\mathrsfs{P}_4,1)$}\label{tb:encoding}
	\begin{center}
		\begin{tabular}{|rp{5cm}|p{5cm}|}
			\hline
			& \centerline{\raisebox{-8pt}{\textbf{Clauses}}\rule{30pt}{0pt}}  &  \centerline{\raisebox{-8pt}{\textbf{DIMACS CNF lines}}} \\
			\hline
			&& \begin{tabular}{l}\texttt{p cnf 9 18}\end{tabular}\\[4pt]
			$I'$ &$\left\{\begin{tabular}{l}
			$y_{1,0}$\\
			$y_{2,0}$\\
			$y_{3,0}$\\ 	
			$y_{4,0}$				
			\end{tabular}\right.$
			&\begin{tabular}{l}
				\texttt{1 0}\\
				\texttt{2 0}\\
				\texttt{3 0} \\
				\texttt{4 0} 			
			\end{tabular}\\
			&&\\[-5pt]
			$T'$ &$\left\{\begin{tabular}{l}
			$\neg y_{1,0}\vee \neg x_1 \vee y_{2,1}$\\
			$\neg y_{2,0}\vee \neg x_1 \vee y_{2,1}$\\
			$\neg y_{3,0}\vee \neg x_1 \vee y_{3,1}$\\
			$\neg y_{4,0}\vee \neg x_1 \vee y_{1,1}$\\
			$\neg y_{1,0}\vee x_1 \vee y_{2,1}$\\
			$\neg y_{2,0}\vee x_1\vee y_{3,1}$\\	
			$\neg y_{3,0}\vee x_1\vee y_{4,1}$\\
			$\neg y_{4,0}\vee x_1$			
			\end{tabular}\right.$
			&\begin{tabular}{l}
				\texttt{-1 -5 7 0}\\
				\texttt{-2 -5 7 0}\\
				\texttt{-3 -5 8 0}\\
				\texttt{-4 -5 6 0}\\
				\texttt{-1  5 7 0}\\
				\texttt{-2  5 8 0}\\
				\texttt{-3  5 9 0}\\
				\texttt{-4  5 0}	
			\end{tabular}\\
			&&\\[-5pt]
			$S'$ &$\left\{\begin{tabular}{l}
			$\neg y_{1,1}\vee \neg y_{2,1}$\\
			$\neg y_{1,1}\vee \neg y_{3,1}$\\	
			$\neg y_{1,1}\vee \neg y_{4,1}$\\
			$\neg y_{2,1}\vee \neg y_{3,1}$\\
			$\neg y_{2,1}\vee \neg y_{4,1}$\\
			$\neg y_{3,1}\vee \neg y_{4,1}$\\
			\end{tabular}\right.$
			&\begin{tabular}{l}
				\texttt{-6 -7   0}\\
				\texttt{-6 -8   0}\\
				\texttt{-6 -9   0}\\
				\texttt{-7 -8   0}\\
				\texttt{-7 -9   0}\\
				\texttt{-8 -9   0}			
			\end{tabular}\\[-5pt]
             &&\\
			\hline
		\end{tabular}
	\end{center}
\end{table}

Now, to get the SAT instance $(V,C)$ that encodes the pair $(\mathrsfs{A},\ell)$ for some $\ell>1$, one transforms the DIMACS CNF representation of $C'=I'\cup T'\cup S'$ as follows:

\begin{enumerate}
\itemsep=-4pt
	\item[1.] In the first non-comment line, replace the numbers $2n+1$ and $2n+\tfrac{n(n+1)}2$ by respectively $(\ell+1)n+\ell$ and $2\ell n+\tfrac{n(n+1)}2$.
	\item[2.] Keep the lines corresponding to the clauses in $I'$ and $T'$.
	\item[3.] For each $t=2,\dots,\ell$, add all the lines obtained from the lines that correspond to the clauses in $T'$ by keeping the sign of every non-zero integer and adding $(t-1)(n+1)$ to its absolute value.
	\item[4.] In each line corresponding to a clause in $S'$, substitute every nonzero integer $\pm k$ by the integer $\pm(k+(\ell-1)(n+1))$.
\end{enumerate}

\subsection{Types of experiments and implementation details}
\label{design-exp-pfa}

We have performed four series of experiments with random PFAs.
\begin{enumerate}
\itemindent=26pt
	\item[Series 1:] studying the probability of being synchronizing  for each version of synchronization for randomly generated binary PFAs with one undefined transition.
	\item[Series 2:] finding an approximation for the average length of shortest \csws\  and \esws\ for randomly generated binary PFAs with one undefined transition.
	\item[Series 3:] studying the influence of the input alphabet size on the length of the shortest synchronizing word. 		
	\item[Series 4:] studying the influence of the \emph{density} (the number of defined transitions) on the length of the shortest synchronizing word.
\end{enumerate}

All our algorithms were implemented in C++ and compiled with GCC 4.9.2. In our experiments we used a laptop with an Intel(R) Core(TM) i5-2520M processor with 2.5 GHz CPU and 4GB of RAM. Our code and datasets are available under \url{https://github.com/hananshabana/SynchronizationChecker}.

\subsection{Generating random PFAs for experiments in Series 1--4}
\label{subsect:generating}
In experiments from Series 1 and 2 we worked with binary PFAs $\mathrsfs{A}=\langle Q,\Sigma\rangle$ with $n\le 100$ states and only one undefined transition. Then one letter must be everywhere defined; we denoted it by $a$ and selected the action of $a$ uniformly at random from all $n^{n}$ maps $Q\to Q$. To ensure that there is a unique undefined transition with $b$, we chose uniformly at random a state $q_b\in Q$ and then selected the action of $b$ uniformly at random from all $n^{n-1}$ maps $Q\setminus\{q_b\}\to Q$.

In experiments from Series 3, we again considered PFAs with an everywhere defined letter and defined its action as above. Then, for each of the remaining letters, we first chose the number $k$ of states at which the letter should be defined; $k$ was chosen uniformly at random from the set $\{1,2,\dots,n-1\}$. Then we selected uniformly at random $k$ different states from $Q$, and for each of these states we chose uniformly at random a state in $Q$ as its image under the action of the letter.

In experiments from Series 4, we considered binary PFAs $\mathrsfs{A}$ with $n\le 100$ states. Let $\rho_{\mathrsfs{A}}$ stand for the density of $\mathrsfs{A}$. In experiments with careful synchronization, possible values of $\rho_{\mathrsfs{A}}$ were chosen between $n+1$ and $2n-1$ since one of the letters must be everywhere defined. For the other letter, we set $k:=\rho_{\mathrsfs{A}}-n$ and then proceeded as in the preceding paragraph. In experiments with exact synchronization, the value of $\rho_{\mathrsfs{A}}$ can be any number between $1$ and $2n-1$. However, it is easy to realize that a PFA with density 1, that is, a PFA with a unique defined transition, is always exactly synchronizing by a word of length 1 (namely, by the letter which action at some state is defined). Thus, the case of density 1 is not interesting at all, and we chose possible values of $\rho_{\mathrsfs{A}}$ from numbers between $2$ and $2n-1$. Then we chose uniformly at random a non-negative number $k\le\rho_{\mathrsfs{A}}$ and applied the procedure described in the preceding paragraph first to $k$ and then to $\rho_{\mathrsfs{A}}-k$.

\section{Experimental results for randomly generated automata and their analysis}
\label{samples}
\subsection{Series 1: Probability of synchronization}
\label{subsect:probability}
This series of our experiments aims to compare the probability of being exactly or carefully synchronizing for the same sample of random automata. Figure~\ref{prop-carf-exact} shows the probability of being synchronizing in each of these two versions of synchronization for the class of binary PFAs $\langle Q,\{a,b\}\rangle$ such that the letter $a$ is everywhere defined and the letter $b$ is undefined at exactly one state in $Q$. For brevity, we refer to automata from this class as \emph{almost complete} PFAs.  Observe that the problem of deciding whether or not a given PFA is carefully synchronizing remains PSPACE-complete even if restricted to this rather special case~\cite[Theorem~3]{Martyugin12}. For each fixed~$n$, we generated up to 1000 random almost complete PFAs.

\begin{figure}[htb]
	\centering
	\begin{tikzpicture}[scale=0.9]
	\begin{axis}[
	xlabel= {Number of states $n$},
	ylabel={Probability},
	legend pos=south east
	]
	\addplot+[mark=*  ]
	plot coordinates {
		
		(5,0.416)
		(10,0.61)
		(20,0.706)
		(30,0.776)
		(40,0.816)
		(50,0.82)
		(60,0.854)
		(70,0.856)
		(80,0.862)
		(100,0.871)
		
	};
	\addlegendentry{Careful synchronization}
	
	\addplot+[mark=*]
	plot coordinates {
		(5,0.936)
		(10,0.964)
		(20,0.996)
		(30,0.998)
		(40,1)
		(50,0.998)
		(60,1)
		(70,1)
		(80,0.998)
		(100,1)
	};
	\addlegendentry{Exact synchronization}
	\end{axis}
	\end{tikzpicture}
	\caption{Probability of being synchronizing for two versions of  synchronization}\label{prop-carf-exact}
\end{figure}
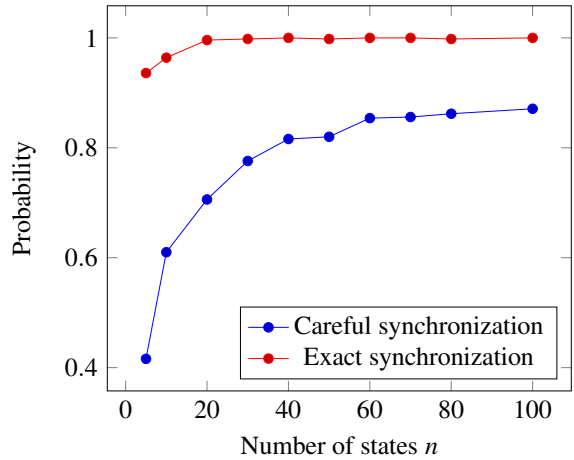

Let $P_{E}(n)$ stand for the probability of a random almost complete PFA with $n$ states to be exactly synchronizing and let $P_{C}(n)$ be the probability that the same random PFA is carefully synchronizing. We have $P_{E}(n) > P_{C}(n)$ since, as mentioned, every carefully synchronizing PFA is exactly synchronizing. The data in Fig.~\ref{prop-carf-exact} show that the gap between $P_{E}(n)$ and $P_{C}(n)$ decreases as $n$ grows but remains non-negligible ever for $n$ close to 100. We also see that $P_{E}(n)$ quickly tends to 1 as the state number grows. Recall the same effect was experimentally observed for CFAs and was then theoretically justified by Berlinkov~\cite{Berlinkov:2016} and Nicaud~\cite{Nicaud:2016,Nicaud:2019}: the probability $P_S(n)$ that a random binary CFA with $n$ states is synchronizing tends to 1 as $n$ tends to infinity. Moreover, it is shown in~\cite{Berlinkov:2016} that $1-P_S(n)=\varTheta(\frac1n)$. It is not difficult to extend the latter result to random almost complete PFAs. Since we have not found such an extension in the literature, we have included it here.

The extension is based on the following easy observation.
\begin{lem}
\label{lem:reduction}
Let $\mA$ be a synchronizing CFA. Then every PFA obtained from $\mA$ by removing a single transition is exactly synchronizing.
\end{lem}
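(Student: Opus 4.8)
The plan is to produce an exactly synchronizing word for the PFA $\mB$ obtained from $\mA$ by deleting one transition, starting from an ordinary synchronizing word of the complete automaton $\mA$. Say the deleted transition is the one labelled by a letter $c\in\Sigma$ at a state $p\in Q$, so that in $\mB$ the pair $(p,c)$ is undefined while every other (state, letter) pair behaves exactly as in $\mA$. Fix a synchronizing word $w=a_1\cdots a_\ell$ for $\mA$ and a state $s$ with $Q\dt w=\{s\}$, this image being computed in $\mA$. The aim is to show that either $w$ or one of its prefixes is exactly synchronizing for $\mB$.

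First I would record the monotonicity fact that drives the whole argument: for every prefix $w_t=a_1\cdots a_t$, the image of $Q$ under $w_t$ computed in $\mB$ is contained in the image of $Q$ under $w_t$ computed in $\mA$. This is an immediate induction on $t$, since moving from $\mA$ to $\mB$ can only discard a state from a current image (when it meets the missing transition) and never create a new one. In particular, at $t=\ell$ the $\mB$-image of $Q$ under $w$ is a subset of $\{s\}$.

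The argument then splits according to whether the process ever meets the deleted transition. I would track the sets $Q\dt w_t$, computed in $\mB$, for $t=0,1,\dots,\ell$, starting from $Q\dt w_0=Q$. If all of these sets are nonempty, then the last one is a nonempty subset of $\{s\}$, hence equals $\{s\}$; thus $w$ itself satisfies $|Q\dt w|=1$ in $\mB$ and is exactly synchronizing. Otherwise let $t^*$ be the first step at which the $\mB$-image becomes empty, so that $Q\dt w_{t^*-1}$ is nonempty but reading $a_{t^*}$ annihilates it.

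The heart of the matter — and the step I expect to be the only genuine obstacle, since a synchronizing word of $\mA$ may well become undefined everywhere once run in $\mB$ — is to exploit that exactly one transition was removed. Because $(p,c)$ is the \emph{unique} undefined pair of $\mB$, the image of a nonempty set can collapse to the empty set only when that set is precisely $\{p\}$ and the letter read is $c$. Consequently $Q\dt w_{t^*-1}=\{p\}$, so the prefix $w_{t^*-1}$ already has a singleton image in $\mB$ and is itself an exactly synchronizing word. In either case $\mB$ is exactly synchronizing, which proves the lemma; I would add the remark that the resulting reset is either the original target $s$ or the source state $p$ of the deleted transition.
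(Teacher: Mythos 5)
Your proposal is correct and follows essentially the same route as the paper's proof: run the complete automaton's synchronizing word in the PFA, note that if it stays defined somewhere its image is a singleton, and otherwise observe that the first prefix whose image dies must have had image exactly $\{p\}$ (the source of the unique deleted transition), so that prefix is already exactly synchronizing. The only cosmetic difference is that you phrase the first case via a monotonicity-of-images induction where the paper simply notes $\zeta(q,w)=\delta(q,w)$ for any state $q$ at which $w$ remains defined.
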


\begin{proof}
Let $\mA=\langle Q,\Sigma,\delta\rangle$. Fix an arbitrary pair $(p,b)\in Q\times\Sigma$ and consider the PFA $\mB=\langle Q,\Sigma,\zeta\rangle$, where $\zeta$ coincides with $\delta$ on the set $Q\times\Sigma\setminus\{(p,b)\}$ and $\zeta(p,b)$ is undefined. Let $w\in\Sigma^*$ be such that $|\delta(Q,w)|=1$. Clearly, if $w$ is defined in $\mB$ at some $q\in Q$, then $\zeta(q,w)=\delta(q,w)$ whence $|\zeta(Q,w)|=1$ and $w$ is an \esw\ for $\mB$. Thus, assume that $w$ is nowhere defined in $\mB$. Let $v$ be the longest prefix of $w$ which is defined in $\mB$ at some state and let $x$ be the letter that follows $v$ in $w$. By the choice of $v$, the set $P:=\zeta(Q,v)$ is not empty but $\zeta(P,x)$ is empty. Thus, all transitions of the form $\zeta(q,x)$ with $q\in P$ must be undefined. However, by the definition of $\zeta$, the only undefined transition in $\mB$ is $\zeta(p,b)$ whence $x=b$ and $P=\{p\}$. In particular, $|\zeta(Q,v)|=1$ and $v$ is an \esw\ for $\mB$.
\end{proof}

\begin{pro}
\label{prop:exact-syn-bound}
$1-P_{E}(n)= \varTheta(\frac1{n})$.
\end{pro}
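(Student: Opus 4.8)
The plan is to establish the claim by proving matching upper and lower bounds on $1-P_E(n)$, leveraging the known result $1-P_S(n)=\varTheta(\frac1n)$ for complete binary automata together with Lemma~\ref{lem:reduction}. The key observation is that an almost complete PFA $\mA$ with undefined transition $b$ at state $q_b$ arises from a complete binary automaton by removing exactly one transition; conversely, completing the action of $b$ at $q_b$ in all $n$ possible ways yields the $n$ complete automata to which $\mA$ is related. I will exploit this bijective correspondence between the probability space of almost complete PFAs and an appropriate space of complete automata.

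For the \emph{lower bound} $1-P_E(n)=\varOmega(\frac1n)$, I would argue that with probability $\varOmega(\frac1n)$ the randomly generated almost complete PFA fails to be exactly synchronizing. The natural route is to find a structural obstruction occurring with probability bounded below by a constant multiple of $\frac1n$: for example, the event that the everywhere-defined letter $a$ and the single defined part of $b$ together leave two states that cannot be merged. One can adapt the combinatorial obstruction used by Berlinkov~\cite{Berlinkov:2016} for complete automata—typically the existence of two distinct states with the same behavior under both letters, or a pair of states forming a `bad' configuration—checking that removing one $b$-transition does not destroy this obstruction with sufficiently high probability.

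For the \emph{upper bound} $1-P_E(n)=O(\frac1n)$, I would use Lemma~\ref{lem:reduction} directly. The lemma shows that whenever the underlying completion is synchronizing, the resulting PFA is automatically exactly synchronizing. Thus the event ``$\mA$ is \emph{not} exactly synchronizing'' is contained in the event ``every completion of $\mA$ at $(q_b,b)$ is non-synchronizing.'' More usefully, I would couple the random almost complete PFA with a random complete automaton: generating $a$ uniformly and generating $b$ on $Q\setminus\{q_b\}$ and then assigning $q_b\dt b$ uniformly at random produces a uniformly random complete binary automaton, while forgetting the value $q_b\dt b$ gives the almost complete PFA. By Lemma~\ref{lem:reduction}, if this complete automaton is synchronizing then $\mA$ is exactly synchronizing. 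Hence $1-P_E(n)\le 1-P_S(n)=O(\frac1n)$, which gives the upper bound immediately.

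Combining the two bounds yields $1-P_E(n)=\varTheta(\frac1n)$. The main obstacle I anticipate is the lower bound: the upper bound follows almost for free from Lemma~\ref{lem:reduction} via the coupling, but to show that non-synchronization persists with probability $\varOmega(\frac1n)$ one must identify an explicit obstruction that survives the deletion of a single $b$-transition and estimate its probability. The delicate point is ensuring that the obstruction is not accidentally `repaired' by the partiality: I would choose an obstruction supported away from $q_b$—or condition on the location of $q_b$—so that the undefined transition neither creates nor removes the merging possibility, allowing the constant-times-$\frac1n$ estimate from the complete case to carry over.
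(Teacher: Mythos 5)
Your upper bound is correct and is in substance the same as the paper's: the paper proves $1-P_E(n)\le 1-P_S(n)$ by double counting the pairs $(\mB,\mA)$ with $\mA$ a completion of a non-exactly-synchronizing $\mB$ (each $\mB$ has $n$ completions, each CFA completes $n$ PFAs, and the two sample spaces have equal size $n^{2n}$), while you phrase the identical fact as a coupling in which the uniform almost complete PFA is obtained by forgetting one transition of a uniform CFA. Both arguments rest on Lemma~\ref{lem:reduction} in exactly the same way, so this half is fine.

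The lower bound, however, is a genuine gap. You correctly identify it as the hard part, but what you offer is a plan to find an obstruction rather than an obstruction. Worse, the one concrete candidate you name --- ``two distinct states with the same behavior under both letters'' --- is not an obstruction at all: two states with identical images under every letter are merged by any single letter, which helps rather than prevents synchronization. There is also a structural point your sketch glosses over: for the lower bound you must certify that a PFA is not \emph{exactly} synchronizing, which is strictly harder than certifying that its completions are not synchronizing, because an \esw\ is allowed to be undefined at some states. So an obstruction inherited from the complete case via ``survival under deletion of one $b$-transition'' does not suffice; you need every word that is defined somewhere to leave at least two states in the image. The paper meets this requirement with an explicit construction: pick the state $q_0$ where $b$ is undefined ($n$ choices), let $a$ act arbitrarily at $q_0$ ($n$ choices), pick a state $q_1\ne q_0$ fixed by both letters ($n-1$ choices), and let both letters map the remaining $n-2$ states into that same set of $n-2$ states ($(n-2)^{2(n-2)}$ choices). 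Then $q_1$ is frozen, every word is defined on the closed set of $n-2$ remaining states and keeps them away from $q_1$, so every defined image has size at least $2$. The resulting fraction $n^2(n-1)(n-2)^{2(n-2)}/n^{2n}$ is asymptotically $e^{-4}/n$, giving $1-P_E(n)=\varOmega(\frac1n)$. Without an explicit construction of this kind (a common fixed point plus a closed complement, or something equally concrete), your argument does not close.
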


\begin{proof}
Given an almost complete PFA $\mB=\langle Q,\{a,b\},\zeta\rangle$, its \emph{completion} is any CFA obtained by defining the undefined transition of $\mA$. Let $n:=|Q|$. If the letter $b$ is undefined at a certain state $p\in Q$, we can choose any state in $Q$ as the image of $p$ under $b$ in the completion, whence $\mB$ has $n$ different completions. Conversely, any CFA $\mA=\langle Q,\{a,b\},\delta\rangle$ serves as a completion for $n$ different almost complete PFAs obtained from $\mA$ by removing the transition $\delta(p,b)$, where $p$ runs over $Q$.

Now consider the set $\mathbf{P}$ of all pairs $(\mB,\mA)$ such that $\mA$ is a completion of $\mB$ and $\mB$ is not exactly synchronizing. Denoting by $N$ is the number of almost complete PFAs with the state set $Q$ that are not exactly synchronizing, we have $|\mathbf{P}|=Nn$. Lemma~\ref{lem:reduction} implies that no CFA $\mA$ such that there is $\mB$ with $(\mB,\mA)\in\mathbf{P}$ can be synchronizing. Any non-synchronizing CFA may occur in at most $n$ pairs from $\mathbf{P}$ whence the number $M$ of CFAs with the state set $Q$ that are not synchronizing satisfies $M\ge\dfrac{|\mathbf{P}|}n=\dfrac{Nn}n=N$. Observe that the total number $n^{2n}$ of binary CFAs is the same as the total number of almost complete PFAs: to construct an an almost complete PFA with $n$ states, we have $n^n$ choices for the action of the everywhere defined letter, $n$ choices for a state at which the other letter is undefined, and $n^{n-1}$ choices for the action of the latter letter at the remaining $n-1$ states. Therefore, we conclude that
\[
1-P_{E}(n)=\frac{N}{n^{2n}}\le\frac{M}{n^{2n}}=1-P_S(n)=O(\frac1n).
\]
Hence, $1-P_E(n)=O(\frac1n)$.

In order to get a matching lower bound for $1-P_E(n)$, we describe a construction that yields `sufficiently many' almost complete PFAs with $n$ states and 2 letters $a$ and $b$ that are not exactly synchronizing. The construction is as follows. First we choose a state $q_0$ at which $b$ is undefined. There are $n$ choices for $q_0$. Then we define the action of $a$ at $q_0$ in an arbitrary way. This gives $n$ choices. After that, there are $n-1$ choices for the state $q_1$ which is fixed by both $a$ and $b$. Finally, there are $(n-2)^{2(n-2)}$ choices for the actions of $a$ and $b$ at the remaining $n-2$ states.  Altogether, the construction gives  $n^{2}(n-1)(n-2)^{2(n-2)}$  almost complete automata, and it is clear that none of PFAs constructed this way are exactly synchronizing. Now when we calculate the fraction $n^{2}(n-1)(n-2)^{2(n-2)}/n^{2n}$, we get
	\[
	\frac{n^{2}(n-1)(n-2)^{2(n-2)}}{n^{2n}}=\left(1-\frac1n\right)\left(1-\frac2n\right)^{2n}\left(1-\frac2n\right)^{-4}\frac1{n}.
	\]
As $n$ tends to the infinity, the first and the third factors tend to 1, and the second factor tends to $e^{-4}$. Thus, the fraction is asymptotically equivalent to $\frac{e^{-4}}{n}$. Hence $1-P_{E}(n)= \varOmega(\frac1{n})$.
\end{proof}

Back to Fig.~\ref{prop-carf-exact}, we see that the probability $P_{C}(n)$ also grows with $n$ but it not clear if it tends to 1 as $n$ tends to infinity. To the best of our knowledge, no theoretical results published so far predict the asymptotic behavior of the function $P_{C}(n)$ nor, more generally, the asymptotic behavior of the probability of being carefully synchronizing for any class of random PFAs. Here, as a result of analysis of the outcome of our experiments, we are able to show that even if $P_{C}(n)$ approaches 1 as $n\to\infty$, it does it at much slower rate than $P_{E}(n)$; see the discussion at the end of the subsection.

First, let us discuss how we proceeded to determine if a PFA $\mA$ from our sample was carefully/exactly synchronizing. According to the general scheme described in Subsect.~\ref{general-exp-pfa}, we encoded $(\mA,1)$ as a SAT instance, wrote the instance in DIMACS CNF format, and scaled it to the instances encoding $(\mA,\ell)$ with $\ell=2,4,8,\dots$ until we reached an instance on which the SAT solver returned YES. Of course, sometimes it  happened that we did not reach such an instance which indicated that either $\mA$ was not carefully/exactly synchronizing or the minimum length of carefully/exactly \sws\ for $\mA$ was too big so that MiniSat 2.2.0 could not handle the resulting SAT instance. In such cases, we had to use some additional ideas to distinguish between non-synchronizing and `too slowly' synchronizing automata.

For exact synchronization, an additional analysis was needed only for small values of $n$ ($n\le 20$) and for a few exceptional PFAs with $n>30$. We analyzed these cases using a brute force algorithm known as the successor tree method. See the recent paper by T\"urker~\cite{Turkish19} for a description of the method and its modern implementation\footnote{T\"urker~\cite{Turkish19} uses the term `reset sequence' for what we call `exactly synchronizing word'.}.

The situation for careful synchronization was more involved. The only known brute force algorithm for careful synchronization is the partial power automaton method, which we will discuss (and compare with our approach) in Sect.~\ref{sec:power}. It turned out that this method could hardly handle PFAs with more than 20 states. Therefore, we devised a simple theoretical condition under which a binary PFA is not carefully synchronizing and checked PFAs against this condition, prior to having started the procedures from Subsect.~\ref{general-exp-pfa}.

Let $q$ be a state and $a$ letter of a PFA. We say that $q$ is $a$-\emph{cyclic} if $q=q.a^k$ for some positive integer $k$.
\begin{lem}
\label{lem:cyclic}
Let a PFA $\mA=\langle Q,\{a,b\}\rangle$ be such that the letter $a$ is everywhere defined and has at least two $a$-cyclic states. If the letter $b$ is undefined at some $a$-cyclic state, the PFA $\mA$ is not carefully synchronizing.
\end{lem}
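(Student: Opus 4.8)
The plan is to exploit the fact that $a$-cyclic states can never leave the image of $Q$ under powers of $a$, which makes it impossible to ever apply the letter $b$. The first step is to establish the following \emph{persistence property}: if $q$ is $a$-cyclic, then $q\in Q\dt a^s$ for every $s\ge 0$. This is immediate from the definition of $a$-cyclicity. Choosing $k\ge 1$ with $q\dt a^k=q$, we also have $q\dt a^{mk}=q$ for every $m\ge 1$; picking $m$ so large that $mk\ge s$ gives $q=(q\dt a^{mk-s})\dt a^s\in Q\dt a^s$, since $a$ is everywhere defined so $q\dt a^{mk-s}$ is a genuine state. Hence the whole set $C$ of $a$-cyclic states satisfies $C\subseteq Q\dt a^s$ for all $s$.

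With this in hand I would argue by contradiction, assuming that $\mA$ has a \csw\ $w=a_1\cdots a_\ell$. The hypothesis that there are at least two $a$-cyclic states gives $|C|\ge 2$, so a word built solely from the letter $a$ can never synchronize: by the persistence property $Q\dt a^\ell\supseteq C$, forcing $|Q\dt a^\ell|\ge 2$ and violating condition $(C3)$. Consequently $w$ must contain at least one occurrence of $b$.

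Now let $t$ be the position of the first $b$ in $w$, so that $a_1=\cdots=a_{t-1}=a$ and $a_t=b$. By the persistence property, the set of states active just before this occurrence is $Q\dt a^{t-1}\supseteq C$, and in particular it contains the $a$-cyclic state $p$ at which $b$ is undefined. But conditions $(C1)$ and $(C2)$ require $b$ to be defined at every state of $Q\dt a^{t-1}$, which fails at $p$. This contradiction shows that no \csw\ can exist, so $\mA$ is not carefully synchronizing.

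The only genuinely load-bearing step is the persistence property of $a$-cyclic states; once it is in place, the role of the two hypotheses becomes transparent---two cyclic states rule out synchronization by $a$ alone, while the undefined $b$-transition at a cyclic state blocks the (necessarily present) first application of $b$---and the remaining case analysis is routine. I therefore expect no serious obstacle beyond stating the persistence argument cleanly.
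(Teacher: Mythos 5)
Your proof is correct and follows essentially the same route as the paper's: both argue by contradiction that a carefully synchronizing word cannot be a power of $a$ alone (since the at least two $a$-cyclic states persist in every image $Q\dt a^s$) and that the first occurrence of $b$ is then applied to a set containing the $a$-cyclic state where $b$ is undefined. The only difference is that you spell out the proof of the persistence of $a$-cyclic states, which the paper merely asserts.
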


\begin{proof}
Arguing by contradiction, suppose that $w\in\{a,b\}^*$ is a \csw\ for $\mA$. Then $w$ starts with $a$ because of the condition $(C1)$. Further, $w$ cannot be a power of $a$ because of the condition $(C3)$ as $a$ has at least two $a$-cyclic states and each $a$-cyclic state belongs to the image of an arbitrary power of $a$. Thus, the letter $b$ occurs in $w$ whence $w$ has a prefix of the form $a^sb$ for some positive integer $s$. As mentioned, each $a$-cyclic state belongs to $Q.a^s$, and we get a contradiction with the condition $(C2)$ as $b$ is undefined at some $a$-cyclic state.
\end{proof}

Clearly, given a binary PFA, it is easy to verify if the PFA satisfies the premises of Lemma~\ref{lem:cyclic}. It is Lemma~\ref{lem:cyclic} that we used to filter out almost complete PFAs that were not carefully synchronizing before having run the SAT-solver method. We stress that Lemma~\ref{lem:cyclic} is only a sufficient condition for an almost complete PFA to be not carefully synchronizing. However, it was well suited for our purposes because it turned out to be applicable frequently enough. Indeed, the statistical properties of random maps are well studied; in particular, if the random variable $\xi$ represents the number of cyclic points of a map chosen uniformly at random from all $n^n$ maps on an $n$-element set, the following expression for the probability of the event $\xi=j$, where $j\in\{1,2,\dots,n\}$, is known (see~\cite{Harris}):
\begin{equation}
\label{eq:harris}
P(\xi=j)=\frac{(n-1)!j}{(n-j)!n^j}.
\end{equation}
For the premises of Lemma~\ref{lem:cyclic} to hold for an almost complete PFA $\mA=\langle Q,\{a,b\}\rangle$, the map $Q\to Q$ induced by the letter $a$ must have at least two cyclic points (= $a$-cyclic states), and the only state at which the letter $b$ is undefined must be $a$-cyclic. Denoting $|Q|$ by $n$, we derive from~\eqref{eq:harris} the following expression for the probability that Lemma~\ref{lem:cyclic} applies to $\mA$:
\begin{equation}
\label{eq:expectation}
\sum_{j=2}^{n-1}\frac{j}{n}P(\xi=j)=\sum_{j=2}^{n-1}\frac{(n-1)!j^2}{(n-j)!n^{j+1}}.
\end{equation}
Observe that the expression~\eqref{eq:expectation} differs in just one summand $\frac{1}{n}P(\xi=1)=\dfrac1{n^2}$ from
\[
n^{-1}E[\xi]=\sum_{j=1}^{n}\frac{j}{n}P(\xi=j).
\]

Evaluating the expression~\eqref{eq:expectation} at $n=100$, say, one gets 0.121989414. (For the numerical computations, we used an elegant method suggested by Zubkov~\cite{Zubkov}.) Thus, more than 12\% of randomly chosen almost complete PFAs with 100 states satisfy the premises of Lemma~\ref{lem:cyclic}. On the other hand, the SAT-solver approach in our experiments succeeded for more than 87\% of almost complete PFAs with 100 states. It is what we meant above when having said that Lemma~\ref{lem:cyclic} was well sufficient to confirm the absence of careful synchronization for an overwhelming majority of almost complete PFAs which are not carefully synchronizing, and thus, to avoid the SAT-solver having to work in vain.

Back to the aforementioned question of the asymptotic behavior of the function $P_{C}(n)$, we notice that even though Lemma~\ref{lem:cyclic} does not exclude $P_{C}(n)$ tending to 1, it allows us to show that even if $P_{C}(n)$ tends to 1 as $n\to\infty$, the convergence rate should be relatively slow. Indeed, it is known (see~\cite{Harris}) that the expectation $E[\xi]$ is asymptotically equivalent to $\sqrt{\dfrac{\pi n}2}$. As observed, the probability \eqref{eq:expectation} that Lemma~\ref{lem:cyclic} applies to a random almost complete PFAs with $n$ states differs from $n^{-1}E[\xi]\sim\sqrt{\dfrac{\pi}{2n}}$ by $\dfrac1{n^2}$, which is asymptotically negligible in comparison with $\sqrt{\dfrac{\pi}{2n}}$. By Lemma~\ref{lem:cyclic}, we have that the difference $1-P_{C}(n)$, that is, the probability that an almost complete PFAs with $n$ states is not carefully synchronizing is asymptotically greater than or equivalent to $\sqrt{\dfrac{\pi}{2n}}$. Thus, $1-P_{C}(n)=\Omega(\frac1{\sqrt{n}})$, while we have demonstrated above that $1-P_E(n)=\varTheta(\frac1{n})$.

\subsection{Series 2: Average length of shortest \csws}
\label{subsect:length}

Here we present only results obtained in the case of careful synchronization since our parallel results for exact synchronization have already been reported in~\cite{Shiop}.

\begin{figure}[H]
	\centering
	\begin{tikzpicture}
	\begin{axis}[
	xlabel= {Number of states $n$},
	ylabel={$\ell_C(n)$},
	domain=10:100,
	legend pos=north west
	]
	\addplot+[mark=*]
	plot coordinates {
		(10,7.480)
		(15,9.790)
		(17,10.680)
		(20,11.610)
		(23,12.580)
		(25,13.150)
		(28,14.010)
		(30,14.410)
		(35,15.660)
		(40,16.770)
		(45,17.790)
		(50,18.870)
		(60,20.600)
		(70,22.220)
		(80,23.820)
		(90,25.040)
		(100,26.550)
		(55,19.750)
		(65,21.280)
		
	};
	\addlegendentry{Observed}
	
	\addplot +[mark=.][domain=10:100] {3.92+(0.49*x)+(-0.005*x^2)+(0.000024*x^3)};
	
	\addlegendentry{Our estimation}
	\end{axis}
	\end{tikzpicture}
	\caption{Approximation of the average length of shortest \csws\ for carefully synchronizing almost complete PFAs with $n$ states}\label{approx-carf}
\end{figure}
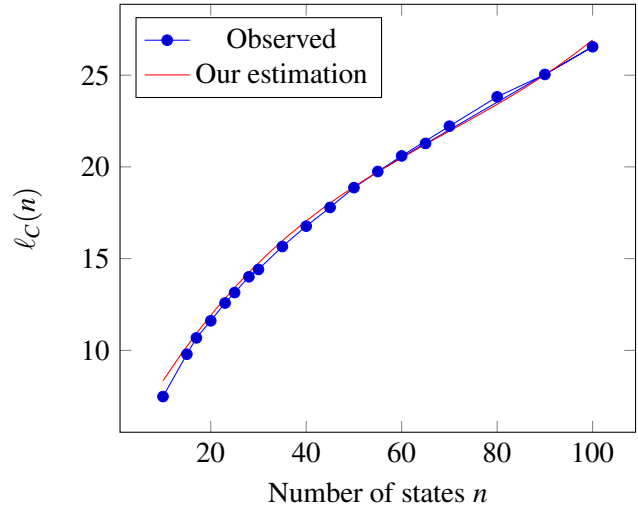

We worked with almost complete PFAs that were found to be carefully synchronizing in the course of the experiment detailed in Subsect.~\ref{subsect:probability}. For such a PFA $\mA$, we were left at the end of the experiment with a number $\ell$, the least power of 2 for which MiniSat returns YES  on the SAT instance that encodes the CSW instance $(\mA,\ell)$.  In order to find a \csw\ of minimum length for $\mA$, we performed standard binary search, having started with $\ell_{\max}:=\ell$ and $\ell_{\min}:=\dfrac\ell2$. That is, we
\begin{enumerate}
\item[1)] let $\ell:=\dfrac{\ell_{\min}+\ell_{\max}}2$;
\item[2)] run MiniSat on the SAT instance that encodes the CSW instance $(\mA,\ell)$;
\item[3)] let $\ell_{\max}:=\ell$ if the answer returned by Minisat was YES, and let $\ell_{\min}:=\ell$ if the answer was NO;
\item[4)] check if $\ell_{\max}-\ell_{\min}=1$: YES means that $\ell_{\max}$ is the minimum length of \csws\ for $\mathrsfs{A}$; NO means that we have to return to Step 1).
\end{enumerate}

Using experimental data found this way, we calculated the average length $\ell_C(n)$ for shortest \csws\ of carefully synchronizing almost complete PFAs with $n$ states. Then we used the least squares method to find a function that best reflects how $\ell_C(n)$ depends on $n$. It turned out that our results are reasonably well approximated by the following expression:
\begin{equation}
\label{eq-approxcarf}
\ell_C(n)\approx 3.92+0.49n-0.005n^2+0.000024n^3.
\end{equation}
We mention that the results for exact synchronization in~\cite{Shiop} look quite similar.

\begin{figure}[H]
	\centering
	\begin{tikzpicture}
	\begin{axis}[
	xlabel= {Number of states },
	ylabel={Relative standard deviation},
	legend pos=outer north east
	]
	\addplot+[mark=*]
	plot coordinates {
		(10,0.38159225)
		(15,0.318618803)
		(17,0.304359789)
		(20,0.272500964)
		(23,0.264849967)
		(25,0.255621131)
		(28,0.236635543)
		(30,0.235014587)
		(35,0.220527863)
		(40,0.208323052)
		(45,0.200001709)
		(50,0.192410953)
		(55,0.187375028)
		(60,0.185061504)
		(65,0.167199559)
		(70,0.171558502)
		(80,0.167102896)
		(90,0.161681628)
		(100,0.152451951)
	};
	\end{axis}
	\end{tikzpicture}
	\caption{Relative standard deviation of datasets}\label{rsdcarful}
\end{figure}
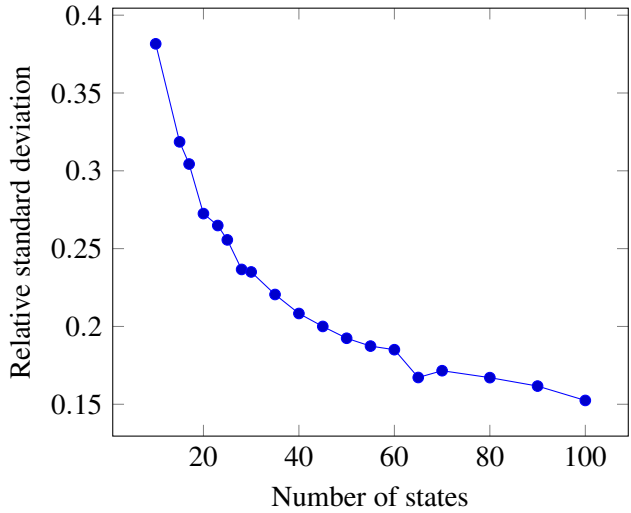

The relation between the approximation~\eqref{eq-approxcarf} and our experimental data is shown in Fig.~\ref{approx-carf}, while Fig.~\ref{rsdcarful} shows the relation between the relative standard deviation of our datasets and the number of states. We see that the relative standard deviation gradually decreases as the number of states grows.

\subsection{Series 3: Influence of the input alphabet size}

Here again, we report only results obtained in the case of careful synchronization. This series of experiments aimed to see how the length of the shortest carefully synchronizing word is affected by the number of input letters.
We experimented with samples of carefully synchronizing PFAs with varying state and input alphabet sizes but approximately the same \emph{relative density}, that is, the same ratio between the density and the number of states. We generated random PFAs as described in Subsect.~\ref{subsect:generating} and applied Lemma~\ref{lem:cyclic} for filtering out PFAs that were not carefully synchronizing. Then we used binary search as in Subsect.~\ref{subsect:length} to determine the minimum length of \csws.

\begin{figure}[H]
	\centering
	\begin{tikzpicture}
	\begin{axis}[
	xlabel= {Number of states $n$ },
	ylabel={Mean length of shortest \csws},
	legend pos=  north west
	]
	\addplot+[mark=., smooth]
	plot coordinates {
		(30,10.5)
		(50,13.7)
		(60,15.05)
		(70,16.3)
		(80,17.8)
		
	};
	\addlegendentry{$|\Sigma|= 3,\ \rho=2n-1$ }
	
	\addplot+[mark=.]
	plot coordinates {
		(30,14.410)  	
		(50,18.870)
		(60,20.600)
		(70,22.220)
		(80,23.820)
			
	};
	\addlegendentry{$|\Sigma|= 2,\ \rho=2n-1 $}
	\end{axis}
	\end{tikzpicture}
	\caption{The cardinality of the input alphabet versus the length of the shortest synchronizing word}
	\label{Cardinality-length}
\end{figure}
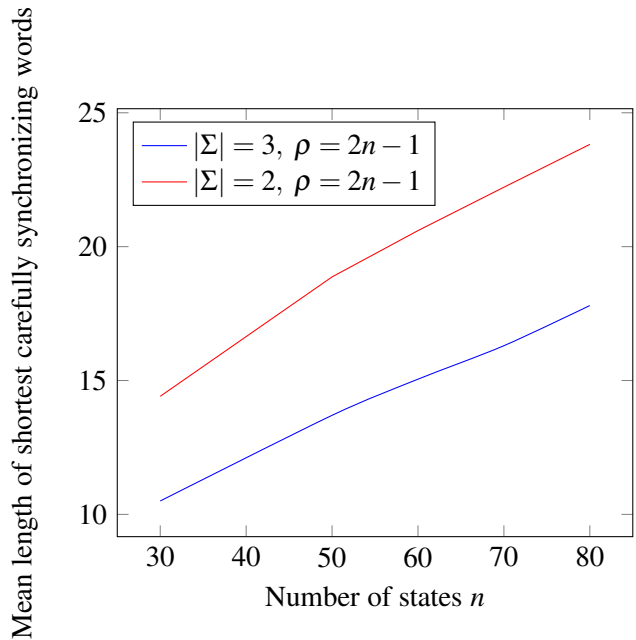

Figure~\ref{Cardinality-length} may serve as an illustration for typical results found in this series of experiments. It shows the average lengths of shortest \csws\ for carefully synchronizing PFAs with 2 or 3 input letters and relative density close to 2. More precisely, we considered PFAs with $n$ states and the density $\rho=2n-1$. (Thus, in the case of 2 input letters, we dealt with almost complete PFAs so that we were in a position to partly re-use the data computed in experiments in Subsect.~\ref{subsect:length}.) We see that the corresponding graphs have similar regular shape and that PFAs with a larger input alphabet synchronize faster. These conclusions held also when other values of relative density were fixed. The observed phenomena are intuitively plausible as having more letters gives more degrees of freedom for careful synchronization and it is to expect that  \csws\ become shorter. However, we have got no rigorous theoretical explanations for these phenomena so far.

\begin{figure}[p]
	\centering
	\begin{tikzpicture}
	
	\begin{axis}[
	xlabel= {$\rho$},
	ylabel={Mean length of shortest \esws},
	legend pos=north east
	]
	\addplot+[mark=.]
	plot coordinates {
		(3,1.085)
		(5,1.26)
		(11,2.03)
		(17,2.26)
		(25,2.9)
		(30,3.26)
		(35,3.86)
		(40,4.26)
		(50,6.23)
		(55,8.83)
		(59, 12.85)
	};

	\end{axis}
	\end{tikzpicture}
	\caption{Exact synchronization versus density for $30$ states}\label{exact-density}

\bigskip

	\centering
	\begin{tikzpicture} 	
	\begin{axis}[
	xlabel= { $\rho$},
	ylabel={Mean length of shortest \csws},
	domain=5:160,
	legend pos= north east
	]
	\addplot+[mark=.]
	plot coordinates {
		(155, 27.5)
		(158,25.37)
		(159,23.82)

	};
	\end{axis}
	\end{tikzpicture}
	\caption{Careful synchronization versus density for $80$ states}\label{carful-density}
\end{figure}

\subsection{Series 4: Influence of density}
In this series, we fixed two parameters $n$ and $\rho\le 2n-1$. For pairs $(n,\rho)$ such that $\rho\ge n+1$, we generated a sample of random binary PFAs with $n$ states, density $\rho$, and an everywhere defined letter as described in Subsect.~\ref{subsect:generating}. Then we computed the average length of shortest \csws\ for PFAs in this sample, having used the same procedure as above, that is, the pre-selection based on Lemma~\ref{lem:cyclic} followed by binary search as described in Subsect.~\ref{subsect:length}. Similarly, for pairs $(n,\rho)$ with $\rho\ge 2$, we prepared a sample of random binary PFAs with $n$ states and density $\rho$, and then we computed the average length of shortest \esws\ for these PFAs. Dealing with shortest \esws\ was slightly more involved. The complication was due to the fact that, in the absence of an everywhere defined letter, a PFA having an \esw\ of some length may have no \esw\ of any larger length. In fact, such situations occur quite often for PFAs of low density. Due to this subtlety, binary search could not be used, and therefore, we were forced to check, for each PFA $\mA$ in our sample, the SAT instances that encoded the ESW instances $(\mA,1)$, $(\mA,2)$, $(\mA,3)$, etc.

Our experiments showed that the average length of the shortest exactly synchronizing word increased as the density increased. This strongly contrasts the case of careful synchronization where the results were opposite: the more the density was, the less was the average length of shortest carefully synchronizing word. Figures~\ref{exact-density} and~\ref{carful-density} illustrate these observations.

When an automaton becomes complete, its carefully and exactly synchronizing words become nothing but classical synchronizing words of the complete case. Therefore, it is natural to expect that, when $\rho$ approaches $2n$, the average lengths of both carefully and exactly synchronizing words for synchronizing binary PFAs with $n$ states tend to the average length of synchronizing words for synchronizing binary CFAs with $n$ states. The latter length has been evaluated by Kisielewicz, Kowalski, and Szyku\l{}a in~\cite{KKS15} as a result of a series of massive experiments. Namely, the average length of synchronizing word for synchronizing binary CFAs with $n$ states is approximately equal to $2.5\sqrt{n-5}$. If one looks at the graphs in Fig.~\ref{exact-density} and~\ref{carful-density}, one may observe that they match the expectation above. Indeed, the expression $2.5\sqrt{n-5}$ gives 12.5 for $n=30$ and approximately 21.65 for  $n=80$. Extrapolating the graphs in Fig.~\ref{exact-density} and~\ref{carful-density} to the right, one gets very close values for the ordinates that would correspond to $\rho=60$ and respectively $\rho=160$.

The same behaviour was observed in our experiments with PFAs of other sizes.

\section{Benchmarks and slowly synchronizing automata}
Besides experimenting with random PFAs, we have tested our approach on certain provably `slowly synchronizing' automata, that is, the ones with the minimum length of \csws\ close to the state number squared.

We restrict ourselves to almost complete PFAs in the sense of Subsect.~\ref{subsect:probability}; recall that these are binary PFAs with only one undefined transitions. De Bondt, Don, and Zantema~\cite[Theorem~17]{dBDZ1} have proved that for any sufficiently large $n$ divisible by 10, there exists an almost complete PFA with $n$ states whose shortest \csw\ length is $\Omega(2^{\frac{n}5})$. This remarkable result has been obtained by a series of non-trivial constructions, built one on the top of others, so that it is very difficult to estimate the constant behind the $\Omega$-notation, to say nothing of exhibiting any such PFA in an explicit form. Therefore we could not test our method on these PFAs.

Fortunately, the same paper \cite{dBDZ1} provides also an explicit series of slowly synchronizing almost complete PFAs. For each $n\ge 3$, let $\mathrsfs{P}_n$ stand for the PFA with the state set $\{1,2,\dots,n\}$, on which the input letters $a$ and $b$ act as follows:
\[
q.a:=
\begin{cases}
q+1  &\text{if }  q=1,n, \\
q  &\text{if }  q=2,\dots,n-1;
\end{cases}
\qquad
q.b:=
\begin{cases}
q+1  & \text{if } q=1,\dots,n-1,\\
\text{undefined} &\text{if } q=n.
\end{cases}
\]
The automaton $\mP_4$ is the one we used as an example in Section~\ref{sec:intro}; see Fig.~\ref{fig:example} there. The automaton $\mP_n$ with $n\ge 4$ is shown in Fig.~\ref{Pn}.

Recall that the classic sequence $\mathrm{fib}(m)$ of the Fibonacci numbers is defined by the recurrence $\mathrm{fib}(m)= \mathrm{fib}(m-1) + \mathrm{fib}(m-2)$ for $m\ge 2$, together with the initial condition $\mathrm{fib}(0)=0$,  $\mathrm{fib}(1)=1$. The following result is stated in~\cite{dBDZ1} without proof:

\begin{figure}[t]
	\begin{center}
\unitlength 1mm
\begin{picture}(110,38)(0,-10)
\thinlines \node(A)(15,-5){$n{-}1$}
\node(B)(30,15){$n$} \node(C)(55,22){1}
\node(D)(80,15){2} \node(E)(95,-5){3}
\drawloop[loopangle=150](A){$a$}
\drawloop[ELpos=60,loopangle=60](D){$a$}
\drawloop[loopangle=30](E){$a$} \drawedge(A,B){$b$}
\drawedge(B,C){$a$} \drawedge[curvedepth=3](C,D){$a$}
\drawedge[curvedepth=-3](C,D){$b$} \drawedge(D,E){$b$}
\put(13,-12){\dots} \put(92,-12){\dots}
\end{picture}
\end{center}
	\caption{The automaton $\mathrsfs{P}_n$}\label{Pn}
\end{figure}

\begin{pro}\label{debondt-thm}
For $n\ge 3$, let $m$ be a unique integer that satisfies the double inequality $\mathrm{fib}(m-1)<n-2\leq \mathrm{fib}(m)$. The shortest carefully synchronizing word for the automaton $\mathrsfs{P}_n$ has length
$n^2+mn-5n-\mathrm{fib}(m+1)-2m+8$.
\end{pro}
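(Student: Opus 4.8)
The plan is to analyze carefully the combined action of $a$ and $b$ on the state set $\{1,\dots,n\}$ of $\mathrsfs{P}_n$ and to explicitly exhibit a carefully synchronizing word of the claimed length, then prove a matching lower bound. I would begin by understanding the dynamics. The letter $b$ is a partial shift: it sends $q\mapsto q+1$ for $q<n$ and is undefined at $n$. The letter $a$ fixes all interior states $2,\dots,n-1$, but maps $1\mapsto 2$ and $n\mapsto 1$. Thus $a$ acts nontrivially only at the two ``endpoints''. The key constraint imposed by $(C2)$ is that $b$ may never be applied while state $n$ is active; in particular, once the active set reaches the top it must be cleared by $a$ before $b$ can be used again.

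First I would identify the structure of any carefully synchronizing strategy. Since $a$ fixes the bulk of the states, the only way to decrease the number of active states is to merge $1$ and $2$ (both get sent to $2$ by $a$, after $1\mathop{.}a=2$) or to use the ``wrap-around'' $n\mathop{.}a=1$. The natural synchronization process is to repeatedly push states upward with $b$ (which is only safe when $n$ is not active) and periodically collapse the low end with $a$. This suggests an inductive or recursive description of the shortest word, and it is exactly here that the Fibonacci numbers should enter: the admissible gaps between applications of $b$ form a pattern governed by the requirement that the image of $Q$ under a prefix never contains $n$ when $b$ is next applied, and counting the minimal number of $a$'s needed to safely advance resembles a Fibonacci-type recurrence. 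I would set up a precise bookkeeping of the active set $Q.w_t$ as a function of the word, proving by induction that an optimal word alternates blocks of the form $a^{s}b$ where the exponents $s$ are forced by the safety condition, and that the total count of each block and its length assembles into $n^2+mn-5n-\mathrm{fib}(m+1)-2m+8$.

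For the lower bound, the plan is to argue that no shorter word can carefully synchronize $\mathrsfs{P}_n$. I would track a suitable monovariant---for instance, the position of the topmost active state, or the ``spread'' of the active set---and show that each letter changes it by a controlled amount, so that reducing $|Q.w|$ from $n$ down to $1$ without ever activating $n$ together with a pending $b$ requires at least the stated number of steps. The Fibonacci growth in the formula strongly suggests that the minimal safe spacing of the collapsing operations obeys $\mathrm{fib}(m-1)<n-2\le\mathrm{fib}(m)$, so I would prove that the threshold $m$ defined in the statement is precisely the number of ``rounds'' needed and that each round contributes a predictable length, with the term $\mathrm{fib}(m+1)$ arising from summing the block lengths via the identity $\sum_{i=1}^{m}\mathrm{fib}(i)=\mathrm{fib}(m+2)-1$.

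The hard part will be the exact calculation of the length and, especially, the lower bound: showing optimality requires ruling out all alternative synchronization schedules, not merely the ``greedy'' one, and proving that the safety constraint $(C2)$ forces the Fibonacci spacing rather than merely permitting it. Getting the additive constants exactly right (the $-5n$, $-2m$, $+8$, and the precise $\mathrm{fib}(m+1)$ rather than a neighboring Fibonacci number) will demand a meticulous, case-checked induction on $n$ relating $\mathrsfs{P}_n$ to $\mathrsfs{P}_{n-1}$ or tracking the active set configuration step by step; I expect this bookkeeping, rather than the conceptual dynamics, to be the principal obstacle.
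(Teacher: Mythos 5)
There is nothing in the paper to compare your attempt against: the paper explicitly states that Proposition~\ref{debondt-thm} is ``stated in~\cite{dBDZ1} without proof,'' and the authors themselves only verify the formula experimentally for $n=4,5,\dots,12$ via their SAT-based algorithm. So the question is whether your proposal stands as a proof on its own.

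It does not; it is a plan in which every load-bearing step is deferred. Your description of the dynamics is accurate (the only merging move is applying $a$ when both $1$ and $2$ are active, $b$ is an injective upward shift forbidden while $n$ is active, and $a$ only moves the endpoints), and you correctly locate where the difficulty lies. But you never exhibit a carefully synchronizing word and verify that its length equals $n^2+mn-5n-\mathrm{fib}(m+1)-2m+8$; you only say the Fibonacci numbers ``should enter'' through the safe spacing of $b$'s, without deriving the recurrence that actually produces $\mathrm{fib}(m+1)$ rather than a neighboring value; and the lower bound is a placeholder --- ``track a suitable monovariant'' names no monovariant and gives no argument excluding non-greedy schedules, which is precisely the part that makes such optimality statements hard (compare the genuinely executed lower-bound arguments the paper does give for $\mH'_n$ and $\mH''_n$, via a subautomaton reduction and via cycle lengths plus Lemma~\ref{lem:sylvester}, respectively). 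A quick sanity check such as $n=4$, $m=3$, length $7$, matching the word $a^2baba^2$ from Section~\ref{sec:intro}, would at least anchor the constants, but even that is absent. As written, the proposal identifies the problem rather than solving it; the ``hard part'' you postpone is the entire content of the proposition.
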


We applied our algorithm to the automata $\mathrsfs{P}_n$ with $n=4,5,\dots,12$, and for each of them, our result matched the value predicted in Proposition~\ref{debondt-thm}. The time consumed ranged from 0.301 sec for $n=4$ to 14164 sec for $n=12$. Observe that in the latter case the shortest \csw\ has length 141 so that the `honest' binary search started with $(\mathrsfs{P}_{12},1)$ required 16 calls of MiniSat, namely, for the encodings of $(\mathrsfs{P}_{12},\ell)$ with $\ell=1,2,4,8,16,32,64,128,256,192,160,144$, $136,140,142,141$. (Of course, if one just wants to confirm (or to disprove) a theoretical prediction $\ell$ for the minimum length of \csws\ for a given PFA $\mA$, two calls of a SAT solver suffice---on the encodings of the CSW instances $(\mA,\ell)$ and $(\mA,\ell-1)$.)

\begin{figure}[b]
	\begin{center}
\unitlength 1mm
\begin{picture}(110,38)(0,-10)
\thinlines \node(A)(15,-5){$n{-}1$}
\node(B)(30,15){$n$} \node(C)(55,22){1}
\node(D)(80,15){2} \node(E)(95,-5){3}
\drawloop[loopangle=150](A){$a$}
\drawloop[ELpos=60,loopangle=60](D){$a$}
\drawloop[loopangle=30](E){$a$} \drawedge(A,B){$b$}
\drawedge(B,C){$a$} \drawedge(C,D){$a$}
\drawedge[ELside=r](B,D){$b$} \drawedge(D,E){$b$}
\put(13,-12){\dots} \put(92,-12){\dots}
\end{picture}
\end{center}
	\caption{The automaton $\mathrsfs{P}'_n$}\label{P'n}
\end{figure}
Observe that the series $\mathrsfs{P}_n$ is closely related to a series of slowly synchronizing CFAs introduced and analyzed in~\cite{AVG}; we mean the series denoted $\mathrsfs{E}_n$ in ~\cite{AVG}. Namely, $\mP_n$ and $\mE_n$ differ only in the action of $b$ at the state $n$: in $\mP_n$ this action is undefined while in $\mE_n$ one has the transition $n\stackrel{b}{\to}2$. Removing from the automaton $\mE_n$ the transition $1\stackrel{b}{\to}2$, one gets yet another series of almost complete PFAs which we denote by $\mP'_n$; see Fig.~\ref{P'n}. It turns out that the automata $\mP'_n$ also have relatively long \csws; we will derive an explicit formula for the length of the shortest \csw\ for $\mP'_n$ a little bit later.

In our experiments, whenever we encountered PFAs that had the minimum length of \csws\ close to the square of the number of states and shared some pattern, we tried to generalize these automata in order to get infinite series. Then we attempted to prove that all PFAs in these series were slowly synchronizing. We present here two of the infinite series that we found this way.

\begin{figure}[t]
	\begin{center}
\unitlength 1mm
\begin{picture}(110,38)(0,-10)
\thinlines \node(A)(15,-5){$n{-}2$}
\node(B)(30,15){$n{-}1$} \node(C)(55,22){$n$}
\node(D)(80,15){{2}} \node(E)(95,-5){{3}}
\node(F)(55,4){1}
\drawedge[ELside=r](F,D){$a$}
\drawloop[loopangle=150](A){$a$}
\drawloop[ELpos=40,loopangle=120](B){$a$}
\drawloop[ELpos=60,loopangle=60](D){$a$}
\drawloop[loopangle=30](E){$a$} \drawedge(A,B){$b$}
\drawedge(B,C){$b$} \drawedge[curvedepth=3](C,D){$a$}
\drawedge[ELpos=40,ELside=r,curvedepth=-3](C,D){$b$} \drawedge(D,E){$b$}
\put(13,-12){\dots} \put(92,-12){\dots}
\end{picture}
\end{center}
	\caption{The automaton $\mH'_n$}\label{mH'n}
\end{figure}

For each $n>4$, let $\mH'_n$ be the PFA with the state set $\{1,2,\dots,n\}$ on which the input letters $a$ and $b$ act as follows:
	\[
	q.a:=\begin{cases}
	2 & \text{if } q=1,2,n,\\
	q & \text{otherwise};
	\end{cases}
	\qquad
	q.b:=\begin{cases}
	\text{undefined} &  \text{if } q=1,\\
	q+1 & \text{if } 1<q<n,\\
	2  &  \text{if } q = n.
	\end{cases}
	\]
The automaton $\mH'_n$  is shown in Fig.~\ref{mH'n}. The reader acquainted with the theory of complete \sa\ immediately recognizes that the subautomaton induced by the action of $a$ and $b$ on the set $\{2,\dots,n\}$ is exactly the $(n-1)$-state automaton $\mC_{n-1}$ from the famous series discovered by \v{C}ern\'{y}~\cite{Cerny:1964} in~1964. Clearly, if a PFA $\mA$ has a subautomaton $\mB$, then every \csw\ for $\mA$ (if exists) also serves as a \csw\ for $\mB$. Hence, every \csw\ for $\mH'_n$ (if exists) must be a \sw\ for the complete subautomaton $\mC_{n-1}$. It follows from~\cite[Lemma~1]{Cerny:1964}, see also~\cite[Theorem~3]{AVG} for an easy alternative proof, that the shortest \sw\ for $\mC_{n-1}$ is the word $w:=(ab^{n-2})^{n-3}a$ of length $(n-2)^2$ which brings every state of the subautomaton to the state 2. Hence no \csw\ for $\mH'_n$ can be shorter than $w$. On the other hand, one can readily compute that $1.w=2$ as well, whence $w$ is a \csw\ for the whole automaton $\mH'_n$. We have thus established

\begin{pro}
	\label{prop:first series}
		The automaton $\mH'_n$ is carefully synchronizing and the minimum length of \csws\ for $\mH'_n$ is equal to $(n-2)^2$.
\end{pro}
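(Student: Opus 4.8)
The plan is to exploit the structure already laid out in the statement's motivating discussion: the automaton $\mH'_n$ contains the \Cerny\ automaton $\mC_{n-1}$ as a subautomaton on the state set $\{2,\dots,n\}$, and careful synchronization of the whole PFA is constrained from below by ordinary synchronization of this subautomaton. First I would make precise the notion of subautomaton: restricting the actions of $a$ and $b$ to $\{2,\dots,n\}$, one checks directly from the transition tables that $b$ is everywhere defined there (since $b$ is undefined only at state $1$) and that $a$ maps this set into itself, so $\langle\{2,\dots,n\},\{a,b\}\rangle$ is a genuine complete automaton, and it coincides with $\mC_{n-1}$ under the \Cerny\ labelling. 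I would remark that any \csw\ $w$ for $\mH'_n$ must act without undefined transitions when applied to the subset $\{2,\dots,n\}\subseteq Q$, hence $w$ restricted to this subautomaton is a legitimate (ordinary) synchronizing word for $\mC_{n-1}$. This gives the lower bound: no \csw\ for $\mH'_n$ can be shorter than the reset threshold of $\mC_{n-1}$, which by \cite{Cerny:1964} (or \cite[Theorem~3]{AVG}) equals $(n-2)^2$, achieved by the explicit word $w=(ab^{n-2})^{n-3}a$.

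For the upper bound I would exhibit this same word $w$ as a \csw\ for the full automaton $\mH'_n$. Two things need checking. The first is that $w$ is carefully applicable starting from all of $Q=\{1,2,\dots,n\}$, i.e.\ that conditions $(C1)$ and $(C2)$ hold; the only danger is the undefined transition at state $1$ under $b$, so I would track the orbit of the extra state $1$ and verify that $b$ is never applied while $1$ is active. Since $w$ begins with $a$ and $1.a=2$, the state $1$ leaves immediately and merges into the subautomaton, after which the remaining letters act inside $\{2,\dots,n\}$ where $b$ is always defined. The second thing is that $w$ collapses all of $Q$ to a single state: the subautomaton states are all sent to $2$ by the \Cerny\ property, and since $1.a=2$ as well, the single letter $a$ already carries $1$ into the common image, so $Q.w=\{2\}$ and $(C3)$ holds.

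The genuinely delicate step is confirming that the extra state $1$ does not force a longer word than $(n-2)^2$ and does not introduce a careless $b$-transition. I would handle this by noting that $1.a=2$ means that after reading just the first letter $a$ of $w$, the active set is contained in $\{2,\dots,n\}$, so from the second letter onward the computation is identical to running $w$ on $\mC_{n-1}$ starting from its full state set; the presence of state $1$ costs nothing beyond being absorbed at the very first step. Thus the main obstacle is purely a bookkeeping verification that the first letter of the minimal \Cerny\ word is $a$ (so that $1$ is safely evacuated before any $b$ is read) and that the image of $1$ coincides with the synchronizing state $2$; both follow from the explicit form $(ab^{n-2})^{n-3}a$ and the definition of $\mH'_n$. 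Combining the matching bounds yields that $\mH'_n$ is carefully synchronizing with reset threshold exactly $(n-2)^2$.
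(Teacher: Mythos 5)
Your proposal is correct and follows essentially the same route as the paper: identify the complete subautomaton on $\{2,\dots,n\}$ as the \Cerny\ automaton $\mC_{n-1}$, derive the lower bound $(n-2)^2$ from the fact that any \csw\ for $\mH'_n$ restricts to a synchronizing word for this subautomaton, and then check that the \Cerny\ word $(ab^{n-2})^{n-3}a$ itself carefully synchronizes the whole PFA because $1\dt a=2$ evacuates the only problematic state before any $b$ is read. The paper compresses the upper-bound verification into the remark that $1\dt w=2$, whereas you spell out the bookkeeping for conditions $(C1)$--$(C3)$ explicitly; the substance is identical.
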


Now we can return to the series $\mP'_n$ defined above. Using techniques developed in~\cite[Section~4]{AVG} for studying slowly synchronizing CFAs, we deduce the following 
\begin{cor}
\label{cor:pnprime}
The automaton $\mP'_n$ is carefully synchronizing and the minimum length of \csws\ for $\mP'_n$ is equal to $n^2-3n+2$.
\end{cor}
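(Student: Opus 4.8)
The plan is to reduce the analysis of $\mP'_n$ to the \Cerny\ automaton $\mC_{n-1}$, much as in the proof of the preceding proposition, but with the letter $a$ replaced by $a^2$. The key observation is that although $\{2,\dots,n\}$ is not closed under $a$ (indeed $n\dt a=1$), it \emph{is} closed under $a^2$: one checks that $a^2$ fixes $2,\dots,n-1$ and sends $n$ to $2$, while $b$ acts on $\{2,\dots,n\}$ as the cyclic permutation $2\to 3\to\dots\to n\to 2$. After the relabelling $q\mapsto q-1$, the pair of transformations induced by $c:=a^2$ and $d:=b$ on $\{2,\dots,n\}$ is exactly $\mC_{n-1}$, whose shortest \sw\ has length $(n-2)^2$ by the result quoted for $\mH'_n$.

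For the upper bound I would take the shortest \sw\ $(cd^{n-2})^{n-3}c$ of $\mC_{n-1}$ and substitute $c\mapsto a^2$, $d\mapsto b$ to obtain a word $w$ over $\{a,b\}$. This word $(cd^{n-2})^{n-3}c$ contains $n-2$ occurrences of $c$ and $(n-3)(n-2)$ occurrences of $d$, so $|w|=2(n-2)+(n-3)(n-2)=(n-1)(n-2)=n^2-3n+2$. It then remains to verify that $w$ is carefully synchronizing: its first letter is $a$, and $Q\dt a^2=\{2,\dots,n-1\}$ coincides with the image of $\{2,\dots,n\}$ under the first $c$ in $\mC_{n-1}$; thereafter the active set lies in $\{2,\dots,n\}$ after every complete $a^2$- or $b$-block, the only transient escape being the single state $1$ produced in the middle of an $a^2$-block and immediately absorbed by the second $a$. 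Hence $b$ is never applied while $1$ is active, conditions $(C1)$–$(C2)$ hold, and $Q\dt w$ equals the singleton to which $(cd^{n-2})^{n-3}c$ synchronizes $\mC_{n-1}$.

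The lower bound is where the real work lies, and it rests on a structural claim about a shortest \csw\ $w$: \emph{every maximal block of consecutive $a$'s in $w$ has length exactly $2$}. Granting this, $w$ is the image under $c\mapsto a^2$, $d\mapsto b$ of a word $\bar w$ over $\{c,d\}$; since the action of $a^2$ and $b$ on subsets of $\{2,\dots,n\}$ reproduces $\mC_{n-1}$ and $Q\dt a^2\subseteq\{2,\dots,n\}$, the word $\bar w$ synchronizes $\mC_{n-1}$, and $|w|=2|\bar w|_c+|\bar w|_d=|\bar w|+|\bar w|_c$. Now $|\bar w|\ge(n-2)^2$ because $(n-2)^2$ is the reset threshold of $\mC_{n-1}$, while $|\bar w|_c\ge n-2$ because $c$ is the only cardinality-decreasing letter of $\mC_{n-1}$ and each application of it lowers the size of the active set by at most $1$, so $n-2$ collapses are indispensable. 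Combining these, $|w|\ge(n-2)^2+(n-2)=(n-1)(n-2)$, matching the upper bound.

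The main obstacle is proving the block-length claim, and this is precisely where carefulness and minimality interact. The point is that state $1$ can become active only when $a$ is applied to a set containing $n$ (nothing else maps to $1$), and once $1$ is active the next letter is forced to be $a$, since $b$ is undefined at $1$. An isolated single $a$ would therefore either be applied to a set avoiding $n$, where it acts as the identity and can be deleted, contradicting minimality, or create a surviving state $1$ that cannot legally be followed by $b$; and a block of three or more $a$'s must contain an identity step, since after $a^2$ the active set lies in $\{2,\dots,n-1\}$ and contains neither $1$ nor $n$. A careful enumeration of these cases—together with the observation that a shortest word need not, and hence does not, finish by mapping everything onto state $1$ (otherwise the penultimate set $\{n\}$ would already be synchronized)—forces all $a$-blocks to have length $2$, completing the reduction.
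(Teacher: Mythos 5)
Your argument is correct and follows essentially the same route as the paper: the explicit word $(a^2b^{n-2})^{n-3}a^2$ for the upper bound, and for the lower bound the observation that in a shortest carefully synchronizing word every maximal $a$-block has length exactly $2$, so that the word can be rewritten over $\{a^2,b\}$ and the bound $(n-2)^2+(n-2)$ transferred from the \Cerny\ automaton. The only cosmetic differences are that the paper routes the reduction through the intermediate automaton $\mH'_n$ of Proposition~\ref{prop:first series} and excludes the factors $bab$ and $a^3$ by substitution arguments, whereas you go directly to $\mC_{n-1}$ on $\{2,\dots,n\}$ and derive the block structure from the behaviour of state $1$.
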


\begin{proof}
It is easy to verify that $(a^2b^{n-2})^{n-3}a^2$ is a \csw\ for $\mathrsfs{P}'_n$. The length of this word is equal to $n(n-3)+2=n^2-3n+2$.

Now let $w$ be a \csw\ of minimum length for $\mathrsfs{P}'_n$. Notice that in $\mathrsfs{P}'_n$, we have $q.bab=q.b^2$ for each state $q$ at which the word $bab$ is defined, that is, for each $q\ne n-1$. Besides, the words $a^3$ and $a^2$ act in $\mathrsfs{P}'_n$ in the same way. Therefore neither $bab$ nor $a^3$ can occur in the word $w$ as a factor---otherwise substituting $bab$ by $b^2$ or $a^3$ by $a^2$, one could have transformed $w$ to a shorter word that remains carefully synchronizing, a contradiction. Further, $w$ must start with $a$ since only this letter is everywhere defined but cannot start with $ab$ because $ab$ is undefined at the state $n$. Finally, let $x$ stand for the last letter of $w$ so that $w=w'x$ for some $w'\in\{a,b\}^*$. Then the minimality of $w$ implies that the image of $\{1,2,\dots,n\}$ under the action of $w'$ is equal to $\{1,2\}$ and $x=a$. The set $\{1,2\}$ is not contained in the image of the letter $b$, whence $w'$ cannot end with $b$. Thus, we conclude that $w=a^2b^{i_1}a^2b^{i_2}\cdots b^{i_k}a^2$ for some $i_1,i_2,\dots,i_k\ge 1$.

Let $c=a^2$, then the word $w$ can be rewritten into a word $v$ over the alphabet $\{b,c\}$. The actions of $b$ and $c$ on the set $\{1,2,\dots,n\}$ define an automaton shown in Fig.~\ref{fig:induced}.
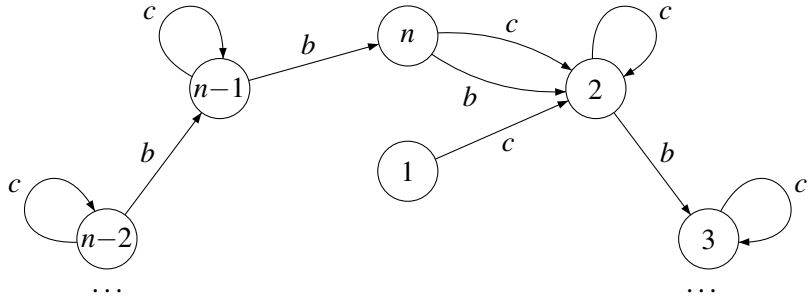
\begin{figure}[b]
	\begin{center}
\unitlength 1mm
\begin{picture}(110,38)(0,-10)
\thinlines \node(A)(15,-5){$n{-}2$}
\node(B)(30,15){$n{-}1$} \node(C)(55,22){$n$}
\node(D)(80,15){{2}} \node(E)(95,-5){{3}}
\node(F)(55,4){1}
\drawedge[ELside=r](F,D){$c$}
\drawloop[loopangle=150](A){$c$}
\drawloop[ELpos=40,loopangle=120](B){$c$}
\drawloop[ELpos=60,loopangle=60](D){$c$}
\drawloop[loopangle=30](E){$c$} \drawedge(A,B){$b$}
\drawedge(B,C){$b$} \drawedge[curvedepth=3](C,D){$c$}
\drawedge[ELpos=40,ELside=r,curvedepth=-3](C,D){$b$} \drawedge(D,E){$b$}
\put(13,-12){\dots} \put(92,-12){\dots}
\end{picture}
\end{center}
	\caption{The automaton defined by the actions of the words $b$ and $c=a^2$ in $\mP'_n$}\label{fig:induced}
\end{figure}
Since the words $w$ and $v$ act on $\{1,2,\dots,n\}$ in the same way, $v$ is a \csw\ for the latter automaton, which, obviously, is isomorphic to $\mH'_n$. By Proposition~\ref{prop:first series} the length of $v$ as a word over $\{b,c\}$ is at least $(n-2)^2$ and $v$ contains at least $n-2$ occurrences of $c$. Since every occurrence of $c$ in $v$ corresponds to an occurrence of the factor $a^2$ in $w$, we conclude that the length of word $w$ is not less than $(n-2)^2+(n-2)=n^2-3n+2$.
\end{proof}

For each $n>4$, let $\mH''_n$ be the PFA with the state set $\{0,1,\dots,n-1\}$ on which the input letters $a$ and $b$ act as follows:
	\[
	q.a:=\begin{cases}
	q+1 & \text{if } q\le n-2,\\
	1 &  \text{if } q=n-1;
	\end{cases}
	\qquad
	q.b:=
	\begin{cases}
	\text{undefined} &  \text{if } q = 0,\\
	q+1\!\!\!\pmod{n} & \text{if }  q\ge 1.\\
	\end{cases}
	\]
The automaton $\mH''_n$  is shown in Fig.~\ref{mH''n}. We observe that the automata $\mH''_n$ are closely related to the so-called Wielandt automata $\mW_n$ which play a distinguished role in the theory of complete \sa; see \cite[Theorem~2]{AVG}. Namely, $\mW_n$ is just $\mH''_n$ with the transition $0\stackrel{b}{\to}1$ added.
\begin{figure}[h]
\begin{center}
\unitlength 1mm
\begin{picture}(110,38)(0,-10)
\thinlines \node(A)(15,-5){$n{-}2$}
\node(B)(30,15){$n{-}1$} \node(C)(55,22){{0}}
\node(D)(80,15){{1}} \node(E)(95,-5){{2}}
\drawedge[ELside=r](B,D){$a$}
\drawedge[curvedepth=3](A,B){$a$}
\drawedge[ELside=r,curvedepth=-3](A,B){$b$}
\drawedge(B,C){$b$} \drawedge[curvedepth=3](D,E){$a$}
\drawedge[ELside=r,curvedepth=-3](D,E){$b$} \drawedge(C,D){$a$}
\put(13,-12){\dots} \put(92,-12){\dots}
\end{picture}
\end{center}
\caption{The automaton $\mH''_n$}\label{mH''n}
\end{figure}

\begin{pro}
	\label{prop:second series}
	The automaton $\mH''_n$ is carefully synchronizing and the minimum length of \csws\ for $\mH'_n$ is equal to $n^2-3n+3$.
\end{pro}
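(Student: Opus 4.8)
The plan is to pin down the value $n^2-3n+3$ by two matching bounds, following closely the pattern of Proposition~\ref{prop:first series} and Corollary~\ref{cor:pnprime}: a lower bound obtained by comparing $\mH''_n$ with the complete Wielandt automaton $\mW_n$, and an upper bound obtained by exhibiting an explicit \csw\ of exactly this length.

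For the lower bound, recall that $\mW_n$ is nothing but $\mH''_n$ with the single transition $0\stackrel{b}{\to}1$ adjoined, so the two automata have identical transition functions except at the pair $(0,b)$, which is undefined in $\mH''_n$. Suppose $w$ is a \csw\ for $\mH''_n$ and let $Q=S_0,S_1,\dots,S_\ell$ be the trajectory of the whole state set under the successive prefixes of $w$. Careful synchronization means precisely that whenever the $t$-th letter of $w$ is $b$, the set $S_{t-1}$ does not contain $0$; hence along this trajectory the transition $(0,b)$ is never invoked, and therefore the same word $w$, applied to $Q$ in $\mW_n$, produces the very same sets $S_0,S_1,\dots,S_\ell$. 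Since $|S_\ell|=1$, the word $w$ is a \sw\ for the complete automaton $\mW_n$, and thus its length is at least the \rt\ of $\mW_n$. By \cite[Theorem~2]{AVG} this threshold equals $n^2-3n+3$, which gives $|w|\ge n^2-3n+3$.

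For the upper bound I would exhibit a concrete \csw\ of length $n^2-3n+3$ and check it directly. The key observation driving the construction is that, among all transitions of $\mH''_n$, the letter $a$ is the only one that merges states, and it does so only on the pair $\{0,n-1\}$, both of which it sends to $1$; the letter $b$ merely rotates and is injective where defined. Thus a synchronizing strategy must repeatedly manoeuvre the current image, using $b$ and $a$, so that it simultaneously contains $0$ and $n-1$, and then apply $a$ to decrease its size by one. One verifies that such manoeuvres can be carried out without ever applying $b$ to a set containing $0$, so that the resulting word is genuinely careful, and that the total number of letters needed to bring $Q$ down to a single state is exactly $n^2-3n+3$; equivalently, one may take a shortest reset word for $\mW_n$ guaranteed by \cite[Theorem~2]{AVG} and check that, started from $Q$, it avoids the forbidden transition.

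The main obstacle is the upper bound: whereas the lower bound reduces in one stroke to the known \rt\ of $\mW_n$, producing an explicit careful word and performing the length bookkeeping is delicate, precisely because the carefulness constraint forbids applying $b$ whenever $0$ is present and so rigidly couples each use of $b$ with the merging action of $a$. The verification that the exhibited word both synchronizes $\mH''_n$ and never triggers the undefined transition, together with the exact count showing its length is $n^2-3n+3$ and not larger, is the part that must be carried out with care.
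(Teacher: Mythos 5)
Your overall two-bound strategy matches the paper's, but your lower bound takes a genuinely different and more economical route. You observe that, by $(C1)$--$(C2)$, a \csw\ for $\mH''_n$ never invokes the missing transition $(0,b)$ along the trajectory of the full state set, hence acts identically in the completion $\mW_n$ and is therefore a \sw\ for that CFA; the inequality $|w|\ge n^2-3n+3$ then drops out of the known \rt\ of the Wielandt automaton \cite[Theorem~2]{AVG}. The paper instead re-derives the bound from scratch: a minimal \csw\ $w$ must end at state $1$, so every word $a^iw$ labels a cycle; hence every integer $\ge|w|$ must be a non-negative integer combination of the simple cycle lengths $n$ and $n-1$, and Lemma~\ref{lem:sylvester} yields $|w|\ge n^2-3n+2$; the final unit is then squeezed out of the partiality itself, since $b$ being undefined at $0$ forces $w=av$, and $v$ would have to label a cycle of the inexpressible length $n^2-3n+1$. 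Your version is shorter but imports the \rt\ of $\mW_n$ as a black box, while the paper's is self-contained modulo the Frobenius-number lemma; both are valid.

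The upper bound is where your proposal has a real gap. You describe the manoeuvring strategy and suggest taking a shortest reset word for $\mW_n$ and checking that it avoids the forbidden transition, but no word is ever exhibited and no verification is carried out; moreover, it is not a priori clear that \emph{some} shortest reset word for $\mW_n$ happens to be careful for $\mH''_n$, so this existence claim itself needs proof. The paper settles the point by writing down the explicit word $(aba^{n-2})^{n-3}aba$, of length $n(n-3)+3=n^2-3n+3$, whose carefulness and synchronizing effect are checked directly by tracking the image of $Q$ and noting that $0$ never belongs to the current image at a moment when $b$ is applied. Until such a word is produced and verified, neither the fact that $\mH''_n$ is carefully synchronizing nor the exact value $n^2-3n+3$ is established by your argument---only the conditional lower bound is.
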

\begin{proof}
Here we also use a suitable adaptation of arguments from~\cite[Section~4]{AVG}.

Suppose that $\mH''_n$ is carefully synchronizing and let $w$ be its carefully synchronizing word of minimum length. Then $w$ must bring the automaton to the state $1$; otherwise, removing from $w$ its last letter would yield a shorter \csw. Since the letter $a$ is everywhere defined, for every positive integer $i$, the word $a^{i}w$ also brings $\mH''_n$ to the state $1$. In particular, $1.a^{i}w=1$, that is, $a^iw$ labels a cycle in the underlying digraph of $\mH''_n$. Therefore, for every $\ell\ge |w|$, there is a cycle of length $\ell$ in $\mH''_n$. The underlying digraph of $\mH''_n$ has simple cycles only of two lengths: $n$ and $n-1$. Each cycle of the digraph must consist of simple cycles of these two lengths, whence each number $\ell\ge|w|$ must be expressible as a non-negative integer combination of $n$ and $n-1$. Here we invoke the following well-known and
elementary result from number theory:

\begin{lem}[{\mdseries\cite[Theorem 2.1.1]{Alfons2005}}]
\label{lem:sylvester} If $k_1,k_2$ are relatively prime positive integers,
then $k_1k_2-k_1-k_2$ is the largest integer that is not expressible as a
non-negative integer combination of $k_1$ and $k_2$.
\end{lem}

Lemma~\ref{lem:sylvester} implies that $|w|>n(n-1)-n-(n-1)=n^{2}-3n+1$. Suppose that $|w|=n^{2}-3n+2$. Since $0.w=1$, there should be a path of this length the state 0 to the state  1. The only letter defined at 0 is the letter $a$, whence $w=av$ for some $v$. Since $0.a=1$, we have $1.v=1$ so that the word $v$ labels a cycle in $\mH''_n$. However, the length of $v$ is $n^{2}-3n+1=n(n-1)-n-(n-1)$ and Lemma~\ref{lem:sylvester}. no cycles of this length may exist in the digraph of $\mH''_n$, a contradiction. Hence, $|w|\ge n^{2}-3n+3$.

On the other hand, it can be readily verified that the word $(aba^{n-2})^{n-3}aba$ of length $n(n-3)+3=n^{2}-3n+3$ carefully synchronizes the automaton $\mH''_n$. Hence $\mH''_n$ is carefully synchronizing, and $n^2-3n+3$ is the minimum length of its \csws.
\end{proof}

From the viewpoint of our studies, the series $\mH'_n$ and $\mH''_n$ are of interest as they exhibit two extremes with respect to amenability of careful synchronization to the SAT-solver approach. The series $\mH'_n$ is turned to be a hard nut to crack for our algorithm: the maximum $n$ for which the algorithm was able to find a \csw\ of minimum length is 13, and computing this word (of length 121) took almost 4 hours. In contrast, automata in the series $\mH''_n$ turn out to be quite amenable: for instance, our algorithm found a \csw\ of length 343 for $\mH''_{20}$ in 13.38 sec. We have analyzed the algorithm built in MiniSat in order to find an explanation for such a strong contrast. Our conclusion is that the superior amenability of $\mH''_n$ is due to many parallel transitions in this automaton. Whenever a binary automaton $\mA=\langle Q,\{a,b\}\rangle$ has two parallel transitions $q_j\stackrel{a}{\to}q_k$ and $q_j\stackrel{b}{\to}q_k$, our encoding of the instance $(\mathrsfs{A},\ell)$ of CSW involves the clauses $\neg y_{j,t-1}\vee\neg x_t\vee y_{k,t}$ and $\neg y_{j,t-1}\vee x_t\vee y_{k,t}$ for each $t=1,2,\dots,\ell$, see Remark~4 at the end of Section~\ref{sec:SAT}. Clearly, this pair of clauses is equisatisfiable with the single clause $\neg y_{j,t-1}\vee y_{k,t}$, and the algorithm of MiniSat seems to make good use of such simplifications of clause systems.

\section{Comparison with the partial power automaton method}
\label{sec:power}
We made a comparison between our approach and the only method for computing \csws\ of minimum length that we had found in the literature, namely, the method based on partial power automata; see~\cite[p.~295]{Martyugin14}. Given a PFA $\mA=\langle Q,\Sigma\rangle$, its \emph{partial power automaton} $\mathcal{P}(\mA)$ has the non-empty subsets of $Q$ as the states, the same input alphabet $\Sigma$, and the transition function defined as follows: for each $a\in\Sigma$ and each $P\subseteq Q$,
\[
P{\dt}a:=\begin{cases}
\{q{\dt}a\mid q\in P\} & \text{provided $q{\dt}a$ is defined for all $q\in P$}, \\
\text{undefined} & \text{otherwise}.
\end{cases}
\]
It is easy to see that  $w\in\Sigma^*$ is a \csw\ of minimum length for $\mA$ if and only if $w$ labels a minimum length path in $\mathcal{P}(\mA)$ starting at $Q$ and ending at a singleton. Such a path can be found by breadth-first search in the underlying digraph of $\mathcal{P}(\mA)$.

We implemented the above method and ran it on our samples of random PFAs. The results of the comparison are presented in Fig.~\ref{power-sat}. In this experiment we had to restrict to PFAs with at most 16 states since beyond this number of states, our implementation of the method based on partial power automata could not complete the computation due to memory restrictions (recall that we used rather modest computational resources). However, we think that the exhibited data suffice to demonstrate that the SAT-solver approach performs by far better.

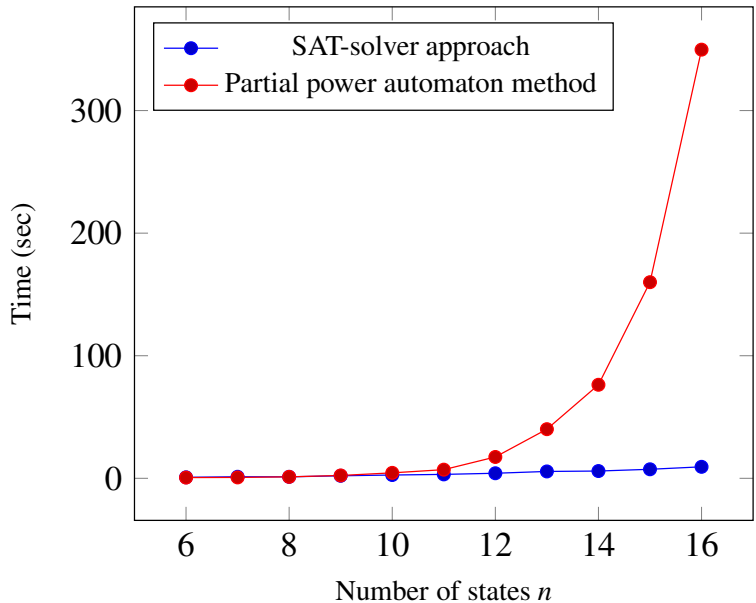
\begin{figure}[htb]
	\centering
	\begin{tikzpicture}
[scale=1.2]
	\begin{axis}[
	xlabel= {\footnotesize{Number of states $n$}},
	ylabel={\footnotesize{Time (sec)}},
	legend pos=north west
	]
	\addplot+[mark=*]
	plot coordinates {
		(6,0.82)
		(7,1.26)
		(8,1.2)
		(9,2.059)
		(10,2.68)
		(11,3.2)
		(12,4.12)
		(13,5.6)
		(14,5.93)
		(15,7.36)
		(16,9.37)
		
	};
	\addlegendentry{\footnotesize{SAT-solver approach}}
	
	\addplot+[mark=*]
	plot coordinates {
		(6,0.59)
		(7,0.72)
		(8,1.2)
		(9,2.32)
		(10,4.45)
		(11,7.06)
		(12,17.41)
		(13,40.08)
		(14,76.3)
		(15,160)
		(16,349.7)

	};
	\addlegendentry{\footnotesize{Partial power automaton method}}
	\end{axis}
	\end{tikzpicture}
	\caption{Comparison between the partial power automaton method and the SAT-solver approach}\label{power-sat}
\end{figure}

\section{Conclusion and future work}
\label{sec:final}

We have presented an attempt to approach the problem of computing a \csw\ of minimum length for a given PFA via the SAT-solver method. For this, we have developed a new encoding, which, in comparison with encodings used in our earlier papers~\cite{ShVo18,Sh18}, requires a more sophisticated proof but leads to more economic SAT instances. We have implemented and tested several algorithms based on this encoding. It turns out that our implementations work reasonably well even when a very basic SAT solver (MiniSat) and very modest computational resources (an ordinary laptop) have been employed. In order to expand the range of our future experiments, we plan to use more advanced SAT solvers. Using more powerful computers constitutes another obvious direction for improvements. Clearly, the approach is amenable to parallelization since computations needed for different automata are completely independent so that one can process in parallel as many automata as many processors are available. Still, we think that the present results, obtained without any advanced tools, do provide some evidence for our approach to be feasible in principle.

We have reported a number of experimental results. For a part for phenomena observed in the experiments, we have provided theoretical explanations but many of our observations still wait for a theoretical analysis.

At the moment, we work on designing a few new experiments based on the encoding of the present paper. In particular, we plan to investigate the so-called $D_3$-synchronization of nondeterministic automata, combining the methods of this paper with a splitting transformation described in \cite[Lemma~8.3.8]{Ito} or \cite[Section~2]{DonZantema17}. (The transformation converts any nondeterministic automaton $\mA$ into a PFA $\mA'$ over a larger alphabet such that $\mA$ is $D_3$-syn\-chro\-nizing if and only if $\mA'$ is carefully synchronizing and the minimum length of $D_3$-syn\-chro\-nizing words for $\mA$ is the same as the minimum length of carefully synchronizing words for $\mA'$.) It appears to be interesting to compare this approach with our earlier results on $D_3$-synchronization~\cite{ShVo18} based on a direct encoding of nondeterministic automata.

\section*{Acknowledgements}
We are grateful to the reviewers of the conference version~\cite{Motor} of this paper for a number of valuable remarks and suggestions.

\small

\end{document}